\newtheorem{definition}{Definition}
\newtheorem{remark}{Remark}
\newtheorem{claim}{Claim}
\def\BibTeX{{\rm B\kern-.05em{\sc i\kern-.025em b}\kern-.08em
    T\kern-.1667em\lower.7ex\hbox{E}\kern-.125emX}}
\def\BibTeX{{\rm B\kern-.05em{\sc i\kern-.025em b}\kern-.08em
    T\kern-.1667em\lower.7ex\hbox{E}\kern-.125emX}}
\begin{document}

\title{Enhancing Attack Resilience in Real-Time Systems through Variable Control Task Sampling Rates}
\author{Arkaprava Sain, Sunandan Adhikary, Ipsita Koley, Soumyajit Dey\\
Department of Computer Science and Engineering, Indian Institute of Technology, Kharagpur, India
}

\maketitle

\begin{abstract}
    Cyber-physical systems (CPSs) in modern real-time applications integrate numerous control units linked through communication networks, each responsible for executing a mix of real-time safety-critical and non-critical tasks. To ensure predictable timing behaviour, most safety-critical tasks are scheduled with fixed sampling periods, which supports rigorous safety and performance analyses. However, this deterministic execution can be exploited by attackers to launch inference-based attacks on safety-critical tasks. This paper addresses the challenge of preventing such timing inference or schedule-based attacks by dynamically adjusting the execution rates of safety-critical tasks while maintaining their performance. We propose a novel schedule vulnerability analysis methodology, enabling runtime switching between valid schedules for various control task sampling rates. Leveraging this approach, we present the \emph{Multi-Rate Attack-Aware Randomized Scheduling (MAARS)} framework for preemptive fixed-priority schedulers, designed to reduce the success rate of timing inference attacks on real-time systems. To our knowledge, this is the first method that combines attack-aware schedule randomization with preserved control and scheduling integrity. The framework’s efficacy in attack prevention is evaluated on automotive benchmarks using a Hardware-in-the-Loop (HiL) setup.
\end{abstract}

\begin{IEEEkeywords}
Schedule-Based Attacks, Multi-Rate Control Execution, CPS Security
\end{IEEEkeywords}

\section{Introduction}
\label{secIntroduction}
\par \noindent Real-time systems (RTSs) in CPS domains like automotive, avionics, power grids, etc. are mostly safety-critical. 
The feature-rich implementation of such modern CPSs makes the system design mixed-critical, for which tasks with different criticality levels are executed in each processor~\cite{baruah2011response}. 
The safe operation of these RTSs relies highly on the timely execution of the safety-critical control tasks. 
For achieving predictable execution behaviour, fixed priority based scheduling algorithms remain popular for running tasks on embedded control units in CPSs, with higher priority typically being assigned to hard real time safety-critical tasks for ensuring deadline satisfaction even under platform uncertainties like sensing/actuation/communication delays. Also, the deterministic nature of static priority scheduling enables the designers to analyse the worst-case response time (WCRT) of every safety-critical control loop and design delay-aware control strategies for these loops~\cite{bini2008delay}.
\par However, this predictability of fixed-priority scheduling exposes the timing information of safety-critical task executions. The attacker first finds out the periodicity of a safety-critical task by observing its data communication in the network or by observing the changes in physical system states of the controlled plant ~\cite{prates2020defense,chen2019novelsidechannel}. By compromising a relatively lower priority task (mostly non-safety-critical) that periodically executes immediately before or after this victim task, the attacker observes its execution timestamps for multiple hyper-periods. 
Since a static fixed-priority schedule is followed, the execution sequence repeats every hyper-period, which helps the attacker infer possible initial offsets of the safety-critical task and predict their future arrival times. 
The attacker can manipulate data inside the victim task's shared I/O device buffer within a time window around those inferred future arrival time instances~\cite{chen2019novelsidechannel}. We term such vulnerable time windows as {\em attack effective window (AEW)}, a quantity which is implementation-dependent and can be measured experimentally by the attacker~\cite{chen2019novelsidechannel}.
Such schedule-based attacks (SBAs) can be categorised into the following four models depending on the timing relation between the victim's arrival and the attacker task's arrival within AEW~\cite{schedguard++}, i.e., Posterior Attack (attacker task executes after victim task), Anterior Attack (attacker task executes before the victim task), Concurrent Attack (attacker task executes while the victim task runs) and, Pincer Attack (combines both posterior and anterior attack). Since the non-critical attacker tasks are of lower priority, the corresponding jobs are mostly scheduled/completed after higher-priority victim task instances in case of nearby arrival times. This makes posterior attacks more prevalent compared to other attack models in embedded controllers~\cite{ren2023protection}. In this paper, we consider the posterior attack model, analyse the drawbacks of state-of-the-art defence mechanisms, and discuss how the proposed defensive solution outdoes them.
\par Researchers have proposed several solutions to prevent and defend against such SBAs. Among these, \emph{schedule randomization}-based defence approaches are widely explored. One of the initial works in this area is \textit{TaskShuffler}~\cite{yoon2016taskshuffler}, which randomizes job execution by allowing lower-priority jobs in the ready queue to run before higher-priority jobs. 
In~\cite{kruger2018vulnerability}, the authors discuss the following two {\em slot-level randomization techniques} to mitigate directed schedule-based attacks. {\em(i)} Slot-level online randomization of schedules, for selecting the next job at runtime meeting task deadline constraints and {\em (ii)} Offline schedule-diversiﬁcation, where offline precomputed schedules are stored beforehand, and their selection is made online. Such schedule randomization methods obfuscate the schedule information, which makes it difficult for the attacker to predict the future arrival times of the victim task.
However, randomizing without any insight into the attacker's strategy produces some vulnerable schedules during randomization, which can make a random attack attempt successful~\cite{nasri2019pitfalls}. The authors in~\cite{chen2021indistinguishability} devise a privacy-preserved strategy that samples a suitable execution offset for critical tasks from a Laplace distribution and switches between them to keep the execution sequences from getting exposed. However, they do not ensure the performance and schedulability criteria of the control tasks while doing so. In another line of work, the authors in~\cite{schedguard++} suggested a temporal protection mechanism for multi-core Linux platforms. In this method, all the non-safety critical tasks, mostly considered to be {\em untrusted}, are blocked within the \emph{AEW} of all the safety-critical or {\em trusted} tasks, eliminating all schedule vulnerabilities. However, this significantly delays the execution of {\em untrusted} tasks in uniprocessor systems, and they frequently miss their deadlines. If the trusted or critical task utilization is considerably high, such a protection strategy can impose an unbounded delay on the execution of other tasks that hampers the overall system  performance~\cite{ren2023protection}. In this work, we propose a schedule randomization policy that is attack-aware and ensures meeting the deadlines for all tasks in the system, by carefully balancing the trade-offs between performance and security. 
%
\par One of the crucial pieces of information for any schedule-based attack model is the future arrival instances of a critical task, which is derived by finding out its periodicity~\cite{chen2019novelsidechannel}. If the sampling period is dynamically switched, the attacker will find it difficult to infer the exact schedule even after doing schedule analysis. With this motivation, we utilise multi-rate controller scheduling strategy to ensure security in real-time task schedules by reducing its \emph{inferability},  i.e. identifying start times of victim task instances. 
As shown in the state-of-the-art research works, dynamic switching between a predefined set of sampling periods, coupled with optimally designed controllers, significantly reduces resource utilization, ensuring performance, compared to conventional fixed sampling rate static scheduling~\cite{henriksson2005optimal}. 
There exist state-of-the-art works that analyse the performance of such systems by analysing their Lyapunov stability~\cite{schinkel2002optimal,adhikary2022work,sain2023work}. Such works mostly analyse whether a \emph{Common Quadratic Lyapunov Function (CQLF)} exists among the control loops with different sampling rates. The existence of such a function that abides by a given performance criteria guarantees stability when the system is subject to arbitrary or fast switching.
There are works that derive a \emph{Multiple Lyapunov functions (MLF)}-based~\cite{liberzon1999basic} or \emph{dwell time}-based switching rules for slowly switching between different modes to ensure stability. Most of these research works focus on optimizing resources or control costs utilizing multi-rate switching strategies~\cite{zamani2016scheduling,bini2008delay, moraes2014sampling, schinkel2001sample}. In this work, we intend to switch between different controller sampling rates in runtime to reduce the \emph{inferability} and {\em vulnerability} of a deployed schedule towards posterior SBAs. To enable this arbitrary switching, we utilize the CQLF-based fast-switching rules that ensure stability.
%
\par We propose a novel \emph{ Multi-Rate Attack-Aware Schedule Randomization (MAARS) Framework} to secure all the safety-critical control tasks that are scheduled to execute in a uni-processor embedded platform against schedule-based posterior attacks~\cite{nasri2019pitfalls}. This essentially combines {\bf (I)} an \emph{offline schedule diversification} technique that chooses a set of schedules with minimum {\em inferability}/predictability for a schedule-based attacker and {\bf(II)} an {\em online schedule selection} technique that suitably deploys schedules to minimize the {\em vulnerability} of an attack-affected safety-critical task. 
%
%
%
To the best of our knowledge, this is the first work that proposes an attack-aware schedule randomization policy. The following are the main contributions of this work:
\par\noindent \textbf{1. }We present a novel approach for judiciously choosing performance-aware sampling rates for a safety-critical control task such that the \emph{inferability} of its future arrival instances are minimized, thereby effectively thwarting the leakage of critical task information \emph{(such as periodicity, initial task offset, etc.)}
\par\noindent \textbf{2. }We propose a novel methodology that generates a set of schedules offline using slot-level randomization techniques for the chosen set of sampling rates for each control task. We also develop a {\em runtime schedule selection algorithm} that intelligently deploys schedules from this set by observing the posterior attacks' effect on a control loop 
such that the vulnerability of the corresponding control task is minimized in that schedule. 
\par\noindent \textbf{3. }We experimentally evaluate how well the proposed \emph{MAARS} framework balances the trade-off between security and system performance compared to state-of-the-art techniques by applying it to a synthetic automotive task set.
We implement the same task set in a \emph{Hardware-in-Loop (HIL)} testbed setup to judge the \emph{real-time} applicability of the MAARS framework. 

\section{System Model}
\label{sec:overall_system_model}
\subsection{Task and Scheduler Model}
\par \noindent Let  $\Gamma$ denote a set of N periodic tasks s.t. $\Gamma = \{\tau_1, \tau_2, ... \tau_N  \}$, that are scheduled on a uniprocessor system by a static priority scheduler with fixed priority. Each task has three parameters $(e_i, p_i, d_i)$, where $e_i$ is the worst-case execution time (WCET), $p_i$ is the sampling period, and $d_i$ is the relative deadline. We assume $d_i = p_i,\, \forall i\leq N$ and the WCET of a task includes jitter and context-switch overheads. We consider a mixed-criticality system, i.e., it comprises control tasks of varying criticality levels in order of their priority. 
Following the work in \cite{schedguard++}, we consider all the safety-critical control tasks as trusted and hard to compromise. Whereas other tasks are considered to be untrusted and can be compromised by a schedule-based attacker. 
Any safety-critical control task from the \emph{trusted} set $\Gamma_t
\subseteq \Gamma$ can be a victim to a \emph{schedule-based attack} (SBA) launched by an adversary who has compromised a non-safety-critical task from the \emph{untrusted} set $\Gamma_u \subseteq \Gamma$. Due to the high criticality level, the trusted tasks are mostly assigned with higher priority than the untrusted tasks. Hence, their executions are not delayed by the non-critical tasks as they are promptly scheduled as they arrive by a {\em fixed-priority preemptive scheduler}. In this work, we consider these trusted safety-critical control tasks to be designed with different sampling rates, which changes their periodicities of execution. We denote the set of periodicities for $i^{th}$ control task as $P_i=\{p1_i,p2_i,\cdots,pl_i\}$. We assume that the task set $\Gamma$ is schedulable by a fixed-priority preemptive scheduler such that the \emph{worst-case response time (WCRT)}  $wcrt_i$
for any task $\tau_i \in \Gamma$ satisfies,
  $\scriptstyle  wcrt_i = e_i + \sum\limits_{\tau_j \in hp(\tau_i)} \left\lceil \frac{r_i^k}{p_j} \right\rceil e_j \leq d_i$. 
Here $r_i^0 = e_i$, $wcrt_i = r_i^{k+1} = r_i^k$ for some $k\geq 0$ and for every $\tau_i\in\Gamma_t$, $p_i=$ minimum periodicity from the set $P_i$. This criteria ensures that the tasks are {\em schedulable} for their maximum execution rates and assuming no precedence constraints among the tasks,  {\em valid schedules} can be generated without any deadline misses. 
%
\subsection{Control Task Model}
\label{sec:controlloopdetectormodel}
\par \noindent We express the plant model as a \emph{Linear time-invariant (LTI)} system having dynamics as follows.
\begin{align}
\label{eq:system_equation}
\nonumber
    &\dot{x}= A_c x(t) + B_cu(t), \ y(t)= Cx(t)+v(t)\\
\nonumber
    &\hat{x}[k + 1] = (A-LC) \hat{x} [k] + B u[k] + L y[k]\\ &u[k+1] = K \hat{x} [k+1]  
\end{align}
Here, the vectors $x \in \mathbb{R}^n, \hat{x} \in \mathbb{R}^n, y \in \mathbb{R}^m$, and $u \in \mathbb{R}^p$ describe the plant state, the estimated plant state, output and the control input, respectively. The matrices $A_c$, and $B_c$, are continuous time matrices that define the continuous-time state and input-to-state transition matrices, respectively. $C$ is the output transition matrix. At the $k^{th}$ sampling instance, the real-time is $t = kh$, where $h$ is the sampling period. We can, therefore, define the discrete-time matrices corresponding to their continuous-time counterparts as $A= e^{A_c h}, \ B= \int_{0}^{h} e^{A_c t}B_c \cdot dt $. $K$ and $L$ are Linear Quadratic Regulator (LQR) controller gain and Kalman estimator gain, respectively. Each control task samples the sensed plant measurement vector $y$ at each sampling iteration, using which future states of the plant are estimated. Based on the estimated plant state, a control task calculates the required control input $u$ to transmit it to the actuator to actuate and control the plant output.
\begin{figure}[!ht]
\label{Fig:Task_Model}
\centering
\vspace{-3mm}
{\includegraphics[clip,width=0.98\columnwidth]{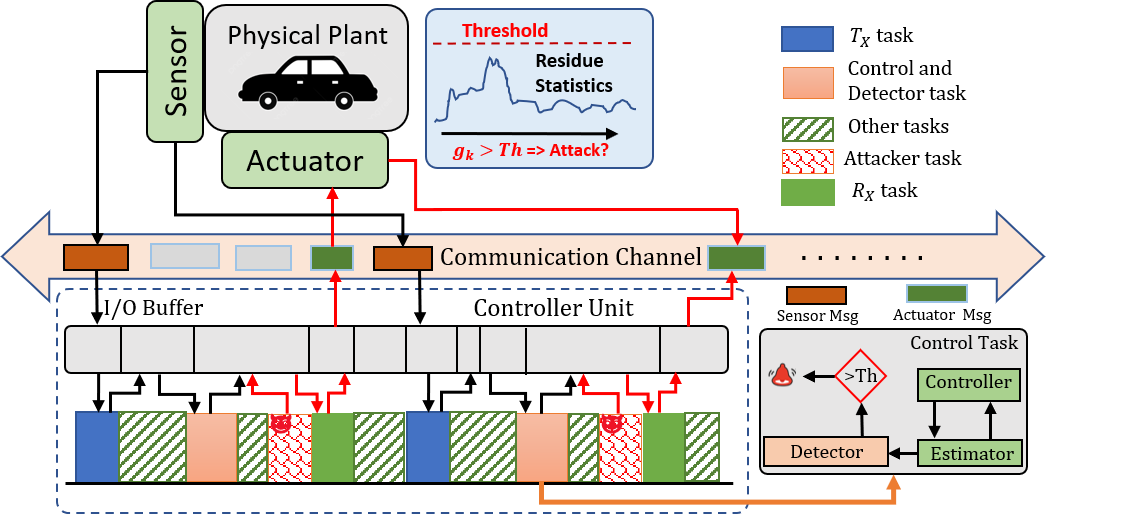}}
\caption{System and Control Task Model}
\vspace{-3mm}
\end{figure}
\subsubsection{Closed Loop Under Sampling Period Variation}
\label{sec:multi_rate switching}
As discussed above, the discrete-time system characteristic matrices depend on the sampling period $h$. Since the LQR gain is designed to minimize a user-defined quadratic cost function, which is sampled once in every $h$ time interval, the LQR control gain is also dependent on the sampling period. So is the estimator Kalman gain $L$, as it is also designed to minimize the variance of the estimation error, i.e., the difference between the estimated plant state $\hat{x}$ and actual plant state $x$ at every sampling iteration.
Therefore, the discrete-time matrices $A, B, K, L$ in Eq.~\ref{eq:system_equation} can be replaced with their sampling period-dependent versions, i.e., $A_h,B_h,K_h,L_h$. By discretizing the continuous-time plant equation in Eq.~\ref{eq:system_equation} we get $x[k+1] = A_hx[k] + B_hu[k]$. Since we intend to capture the change in the progression of the control loop w.r.t. the change in its sampling period,
we define an extended augmented state vector of the control task as $X = [x^T, \hat{x}^T]^T$ that captures both plant and estimator/controller dynamics. By replacing $u$ with $-K_h\hat{x}$ and $y$ with $Cx$, the overall closed-loop system evolves as follows.

{\small\begin{align} 
\label{eq:state_space}
     &X[k + 1] = \mathbb{A}_h X[k], \mathbb{A}_h = \begin{bmatrix} A_h & -B_hK_h \\ L_hC & A_h-L_hC-B_hK_h \end{bmatrix}
\end{align}}%
For two different sampling periods $p1$ and $p2$, we denote the discrete-time augmented system matrices by $\mathbb{A}_{p1}$ and $\mathbb{A}_{p2}$. Our methodology uses this augmented system representation in Eq.~\ref{eq:state_space} to ensure a desired performance while varying periodicities of control tasks for security.
\subsubsection{Residue-based Anomaly Detector}
\label{sec:detector_model}
Each of the safety-critical control loops is equipped with a residue-based detector that observes changes in the system residue (i.e., the difference between the sensed and estimated outputs) for anomaly detection. We consider an integrated implementation of controller and detector functionalities as part of the control task to avoid data manipulation. We use a $\chi^2$-based detector in this work. The $\chi^2$ detector utilises a normalized quadratic function of the residual to amplify and easily detect minute variations in system residue. For system residue $res[k]$ at $k^{th}$ sampling iteration, its chi-square measure is $z[k] = res[k]^T \Sigma_{res}^{-1} res[k]$ where $\Sigma_{res}$ is the variance of system residue. Considering the measurement noise to be a zero-mean Gaussian noise, $res[k] \sim \mathcal{N}(0,\Sigma_{res})\Rightarrow z[k]\sim \chi^2(m,2m)$, where $m$ is the number of output measurements, considered as the degree of freedom of the $\chi^2$ distribution. We employ a windowed chi-square detector that compares the average value of the chi-square statistic of system residue over a pre-defined time window ($N$), i.e., $g[k]=\frac{1}{N}\sum\limits_{i=0}^{N-1}z[k-i]$ and compares it with a pre-defined threshold $Th$. The threshold is calculated to maintain a desired false alarm rate~\cite{koley2021catch}. The chi-squared detector raises the alarm, denoting an attack attempt on a certain closed loop when $g[k]> Th$ at any $k^{th}$ sampling period of that closed loop.
\section{Threat Model}
\label{secthreatmodel}
In this section, we discuss in-depth how we model the schedule-based attacker and explain our assumptions about the attacker's objectives and capabilities. The objective of the attacker is to utilize the timing information exposed by the processor-level task execution schedules to compromise tasks with higher criticality. 
Since tasks with higher criticality are mostly developed by OEMs or {\em trusted} vendors and go through thorough security scans and functionality checks, they are usually very difficult to compromise~\cite{ren2023protection,schedguard++}. Hence, an attacker is more likely to attempt compromising less critical tasks, that are often assigned lower priorities. We denote such compromisable tasks as {\em untrusted} tasks. Since these untrusted tasks are part of the software that comes from a range of third-party vendors\cite{ren2023protection}, compromising them using a man-in-the-middle-type attack is possible. As mentioned in Sec.~\ref{secIntroduction}, depending on the timing relationship between the victim and attacker task, SBAs can be categorised into 4 types~\cite{schedguard++}. 
This work considers a posterior SBA model and provides a mitigation framework against it. The following are assumptions regarding the attacker's ability, objectives, and attack methods.
\subsubsection{Attacker's Capabilities}\label{secAtkassum}
We make the following assumptions about the attacker's capabilities: \textit{(1)} The scheduler is trustworthy and can not be compromised by the attacker. Consequently, the attacker lacks direct control over task execution and is restricted to leveraging compromised tasks only when the scheduler permits.
\begin{wrapfigure}[8]{l}{0.42\columnwidth}
    \centering   
    \vspace{-4mm}
\includegraphics[width=0.44\columnwidth,clip]{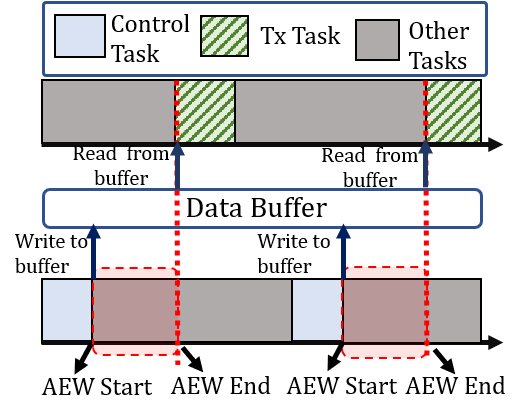}
\vspace{-7mm}
   \caption{Attack Effective Window}
        \label{fig:AEW_example}
\end{wrapfigure}
\textit{(2) } The adversary knows the scheduling policy used, and it can compromise \emph{one task among all the} lower-priority untrusted tasks ($\Gamma_u$). This enables the attacker to observe executions of this compromised untrusted task for several hyper-periods to find out partial information about its intended victim control task($\in\Gamma_t$). \textit{(3) }The attacker can partially deduce the sampling rates of any control task ($\in\Gamma_t$) either by observing the message data packets in the network or physical system states.
\textit{(4)} Attacker can modify the data written by the victim control task inside an I/O buffer or cache only if it executes within an \emph{Attack Effective Window}~\cite{schedguard++}. AEW is the specific time interval within which the attacker task must be executed for successful data manipulation. This duration is completely implementation-dependent and can be derived by experiments~\cite{chen2019novelsidechannel}. The attacker must tamper the data in the I/O buffer before the transmission task reads and transmits it for actuation. Hence, AEW is the duration after which the transmission task transmits the control program-generated actuation data and is scheduled to execute with the same periodicity as the control task. In Fig.~\ref{fig:AEW_example}, we illustrate this where the AEW of the control task has been shaded in red. AEW starts when the control task (in blue shade) finishes execution, writes its data to the buffer and ends once the transmission task (in green stripe) starts.
We denote the AEW length for a control task $\tau_i \in \Gamma_t$ by $\Omega_i$.
%
\subsubsection{Attack Execution} \label{secAtkexec}
We assume an attacker's objective is to manipulate control inputs by performing a schedule-based posterior attack on control tasks. To implement such an attack, the attacker must infer the exact future arrival instances of the victim task. Having compromised a lower-priority untrusted task, the attacker can log that task's start and end time as part of the overall schedule. Utilising this information, it attempts to predict the start and end times of other trusted tasks in the schedule. The attacker can achieve this by using a \emph{schedule ladder-based analysis} proposed in the work~\cite{chen2019novelsidechannel}. We briefly discuss how the schedule ladder is constructed, and the attacker uses it to derive the future arrival instances of the victim task.
\begin{wraptable}[5]{l}{0.30\columnwidth}
\captionsetup{font=scriptsize}
    \centering
    \scriptsize
    \begin{tabular}{|c||c|c|}
        \hline
        Tasks & $P$ & $e$\\
        \hline\hline
        $\tau_1$ & \{4,5\} & 1\\
        \hline
        $\tau_2$ & 4 & 1\\
        \hline
        $\tau_3$ & 5 & 2\\
        \hline
    \end{tabular}
    \caption{\centering Example Task-Set 1}
\label{tab:motivating_task_set_1}
\end{wraptable}
A schedule ladder represents a task schedule arranged vertically with adjacent timelines of equal row size. Each column in the ladder represents an atomic observation period, or unit time, of real-time duration $\delta$. The attacker takes the row size of the ladder as equal to the victim task's sampling period measured in time units of size $\delta$. Hence, for a given victim task $\tau_i$ executing with a sampling rate $p_i$ time units, the row length of the ladder becomes $\delta p_i$. The motive behind taking such a row length is that the victim task will arrive once in every row of the ladder and will occupy $\delta e_i$ columns. Moreover, all arrivals of $\tau_i$ will be located in the same column. We illustrate a schedule ladder diagram for the task set in Tab.~\ref{tab:motivating_task_set_1} in  Fig.~\ref{fig:ladder_analysis}. The row length of the ladder is 4 since $\tau_1$(victim) sampling rate = 4. The first row represents the timeline $[0,4]$, second row $[4,8]$ and so on. The blue arrow shows the arrival column of the victim task that arrives once in every row and occupies column 1 of the ladder.  
While performing an analysis, the starting point of analysis (observation) of the top row can be arbitrarily chosen by the attacker, which is decided by the start of its observation window. The attacker starts observing its own start and end times using the compromised untrusted task. Note the attacker task is a lower priority untrusted task (w.r.t victim). Hence, if the victim task arrives in the same column, it preempts the attacker task. This is shown in Fig.~\ref{fig:ladder_analysis}, where the attacker task $\tau_3$ arrives at instances $t=0,20$ in the first column (since its sampling rate is 5) but gets preempted by $\tau_1$. In column 2 it gets preempted by $\tau_2$. Therefore, it always executes in the third and fourth column of the ladder.
\begin{wrapfigure}[13]{l}{0.41\columnwidth}
\vspace{-4mm}
{\includegraphics[clip,width=0.40\columnwidth]{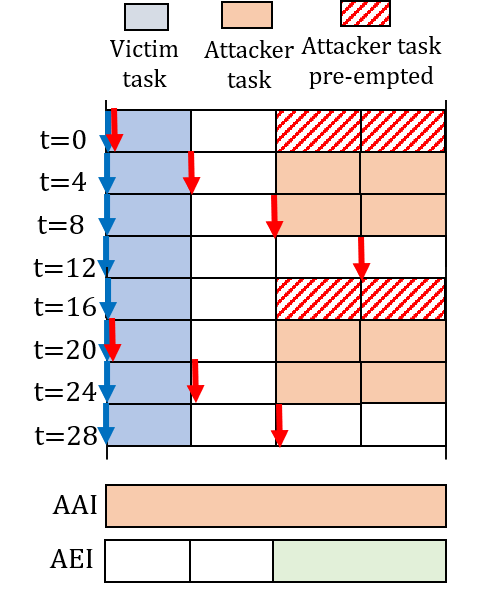}}
\caption{ \centering Ladder diagram of Task-Set-1 illustrated along with their corresponding AEI and AAI.}
\vspace{-5mm}
\label{fig:ladder_analysis}
\end{wrapfigure}
%
\par After the end of observation, the attacker finds out \emph{Attack Arrival Instances (AAI)}. The AAI is the set of ladder columns where the attacker task has arrived at least once during its entire observation window. For e.g. in Fig.~\ref{fig:ladder_analysis}, the AAI is $[0,4]$, shown below the ladder diagram. The set of ladder columns where the attacker task has been executed at least once during its entire observation window are termed as \emph{Attack Execution Instances (AEI)}. Note that the ladder columns where the attacker task arrives ($\in AAI$) but never gets executed ($\notin AEI$ since it is preempted by higher priority victim tasks) are the possible candidates for the columns where the victim task arrives. Hence, the attacker easily infers these ladder columns, which are in AAIs but not AEIs, as the victim task's arrival column (offset), predicting them as the victim's future arrival instances. We illustrated this in Fig.~\ref{fig:ladder_analysis} for task-set Tab.~\ref{tab:motivating_task_set}, where using $\tau_2$ as an attacker task results in $AAI=4$ and $AEI=2$. One immediate conclusion we draw from this discussion is that a longer execution time for the attacker task make it easier for an attacker to infer the arrival column of the victim task. This is because the attacker task execution will span a greater proportion of ladder columns, making it easier to guess the correct victim task arrival column.
\par Consider the task set in Tab.~\ref{tab:motivating_task_set}, where the trusted control task set $\Gamma_t =\{\tau_1, \tau_2\}$ and the untrusted task set $\Gamma_u =\{\tau_3\}$ (Fig.\ref{fig:Attack_Effective_Window}). After determining the arrival time of the victim task $\tau_1$, the attacker uses untrusted task $\tau_3$ to launch a posterior SBA on $\tau_1$. We assume the AEW of $\tau_1$ is $\Omega_1 = 1$ time unit (highlighted in the shaded region of the timeframe). Therefore, for a successful SBA, $\tau_3$ must execute within this timeframe after $\tau_1$ finishes executing. Using this threat model, we demonstrate how state-of-the-art secure scheduling policies can mitigate such attacks and discuss their shortcomings in balancing performance and security, which motivates our novel methodology.
%
\vspace{-1mm}
\section{Motivating Example}
\label{sec:motivating_example}
%
\noindent We motivate our proposed methodology with an example by discussing how {\em (i)} the conventional protection window-based secure scheduling methods, such as~\cite{schedguard++}, fail to ensure system performance, optimal resource utilisation, etc., and {\em (ii)} the randomization-based scheduling methods, such as~\cite{yoon2016taskshuffler}, fail to guarantee security against SBAs. This, in turn, motivates our methodology, which utilises variable sampling period-based implementation of control tasks for balancing the trade-offs between security and performance without compromising the schedulability of the task set.
\begin{wraptable}[5]{l}{0.38\columnwidth}
\captionsetup{font=scriptsize}
    \centering
    \scriptsize
    \vspace{-3mm}
    \begin{tabular}{|c||c|c|}
        \hline
        Tasks & $P$ & $e$\\
        \hline\hline
        $\tau_1$ & \{2,3\} & 1\\
        \hline
        $\tau_2$ & 4 & 1\\
        \hline
        $\tau_3$ & 4 & 1\\
        \hline
    \end{tabular}
    \caption{Example Task Set-2}
    \label{tab:motivating_task_set}
\end{wraptable}
\noindent We refer the task set in Tab.~\ref{tab:motivating_task_set}, where the trusted control tasks $\Gamma_t =\{\tau_1, \tau_2\}$ and the untrusted tasks $\Gamma_u =\{\tau_3\}$. The control task $\tau_3$ executes at two different sampling rates $2$ and $3$ time units ($\delta$ time unit$=10 ms$), respectively. We assume the attacker has compromised the untrusted task $\tau_3$ to launch a posterior attack on the victim task $\tau_1$. The AEW duration for $\tau_1$ is $\Omega_1=1$ time unit. 
\begin{wrapfigure}[7]{l}{0.55\columnwidth}
\centering
\vspace{-2mm}
{\includegraphics[width=0.57\columnwidth,clip]{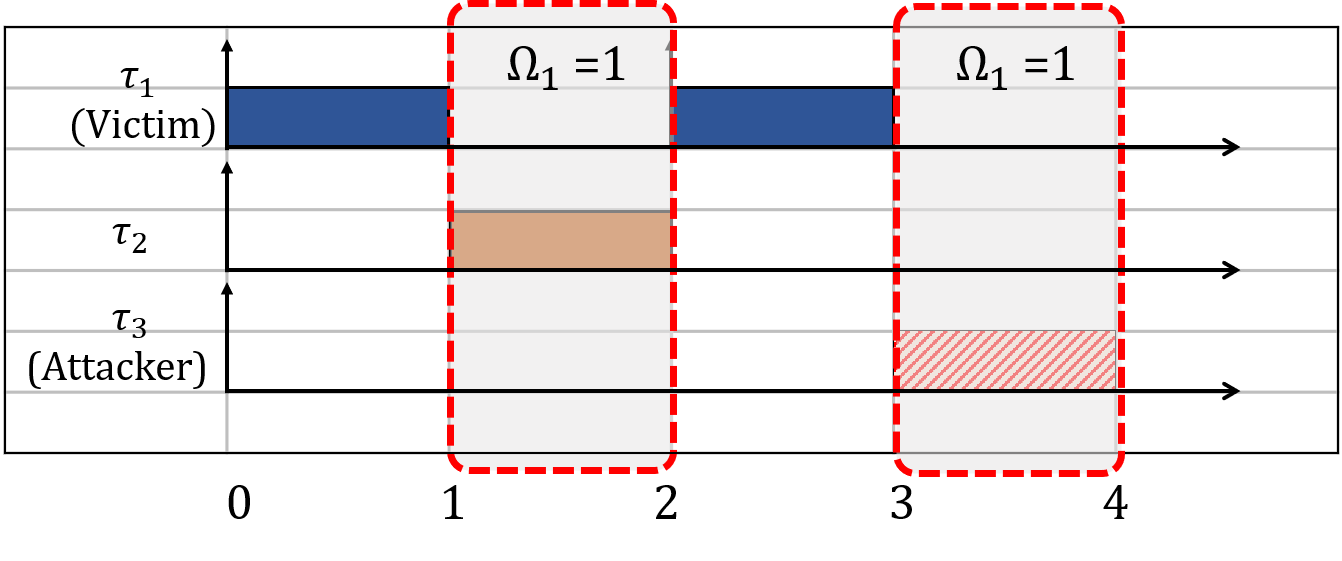}}
\caption{AEW-based scheduling of Taskset-2}
\label{fig:Attack_Effective_Window}
\end{wrapfigure}
\par Protection window-based approaches such as in~\cite{schedguard++} propose restricting the execution of untrusted tasks' job instances within the AEW of all the trusted tasks. 
Fig.~\ref{fig:Attack_Effective_Window} demonstrates how a task schedule generated following~\cite{schedguard++} (for the task set in Tab.~\ref{tab:motivating_task_set}) leads to deadline misses of untrusted tasks and wastage of CPU resources. The first job of $\tau_1$ executes at $[0,1]$, In the next interval at $[1,2]$, job instance of $\tau_2$ executes. In $[2,3]$, second job of task $\tau_1$ arrives and executes. However, the job instance of $\tau_3$ will not be allowed to execute at $[3,4]$ and hence will miss its deadline, even though the CPU remains idle. This leads to inefficient CPU utilisation.
\par On the other hand, the schedule randomization methods are 
more suitable since they make sure every task meets its deadlines (irrespective of trust). However, the state-of-the-art randomization policies do not consider the attacker's model into consideration and hence are often more vulnerable to SBAs while randomly changing the schedules~\cite{nasri2019pitfalls}.
Earlier works~\cite{yoon2016taskshuffler} have demonstrated that selecting schedules using such attack-unaware randomization policies can still lead to successful attacks. 
\begin{wrapfigure}[12]{l}{0.6\columnwidth}
\centering
\vspace{-5mm}
{\includegraphics[width=0.62\columnwidth,clip]{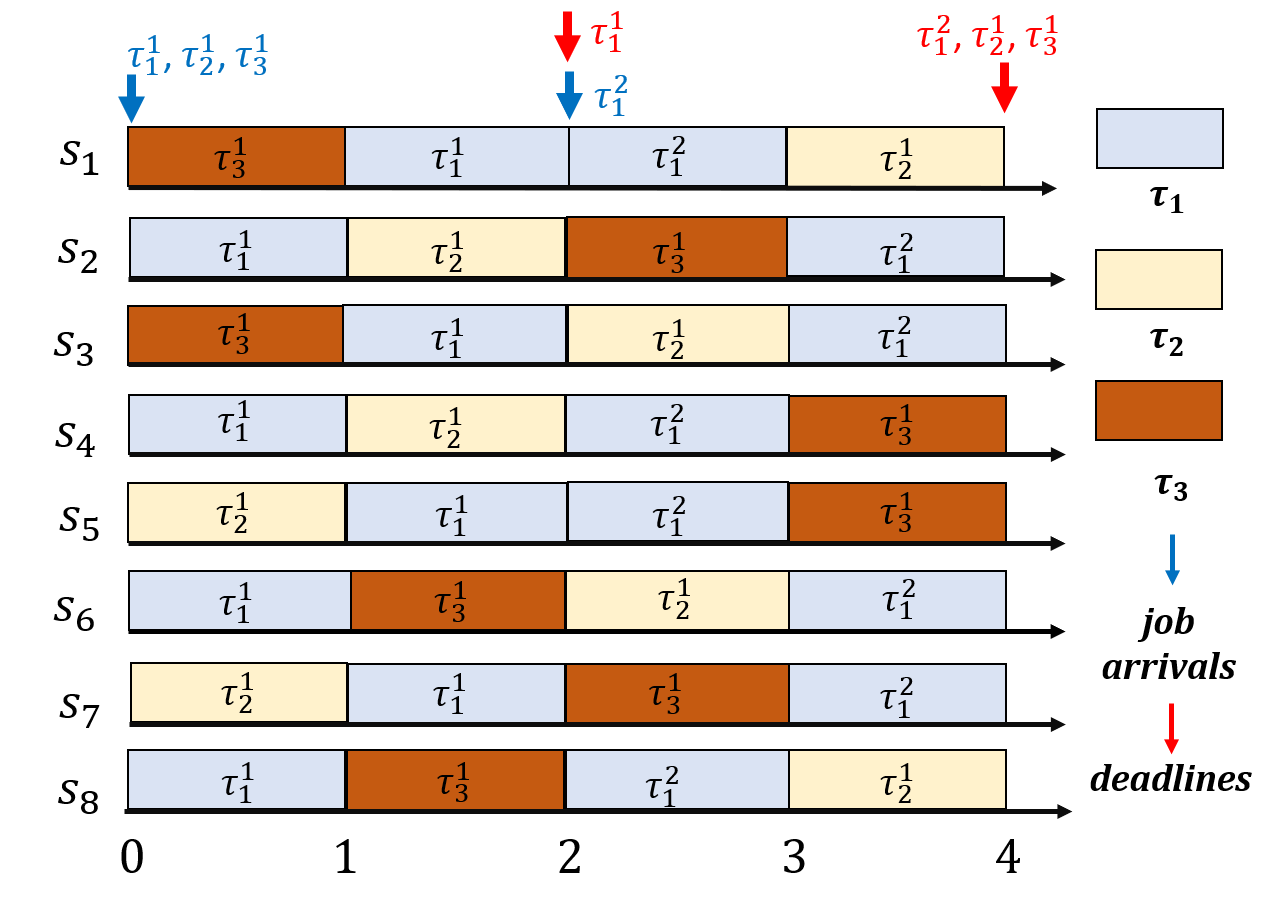}}
\caption{\centering \scriptsize Feasible Schedules Generated using Task-Set 2 (Considering $p_1=2$)}
\vspace{-12mm}
\label{Fig:feasible_schedule_examples}
\end{wrapfigure}
For example, if we apply such a schedule randomization method~\cite{yoon2016taskshuffler} on the same task set for a single sampling rate of $p_1=2$, for control task $\tau_1$, we can generate a total of $8$ unique feasible task schedules of hyper-period length of $4$ time units (see Fig.~\ref{Fig:feasible_schedule_examples}). But among them, 4 schedules $s_4,s_5,s_6$ and $s_7$ are vulnerable for $\tau_1$, since instances of $\tau_3$ appears within the AEW of the victim task $\tau_1$. However, if we use two different sampling periods for the control task $\tau_1, \ P_1=\{2,3\}$, we can form two possible task specifications $\Gamma_1$ and $\Gamma_2$ with unique sets of sampling period, i.e., $\{p_1=2,p_2=4,p_3=4\}$ and $\{p_1=3,p_2=4,p_3=4\}$. For the task specification $\Gamma_2$, we can generate $36$ unique and feasible schedules (using TaskShuffler) with a hyper-period length of $12$ time units. By analysing the vulnerability of $\tau_1$ for all 36 schedules, $12$ schedules are found to be safe (no posterior attackable instances within AEW) for victim task $\tau_1$. Thus, multi-rate execution enables the generation of a higher number of randomized schedules with minimal vulnerability.
From these examples, we draw a conclusion that utilising multi-rate for trusted tasks along with imparting attack awareness to the task schedules (w.r.t trusted tasks) can 
hamper a schedule-based posterior attacker's 
 success rate. 
\section{Proposed Methodology}
\label{sec:Methodology}

In this section, we present the methodology for generating multi-rate attack-aware randomized schedules. 
%
\begin{figure}[!hb]
\centering
\vspace{-4mm}
{\includegraphics[width=\columnwidth,clip]{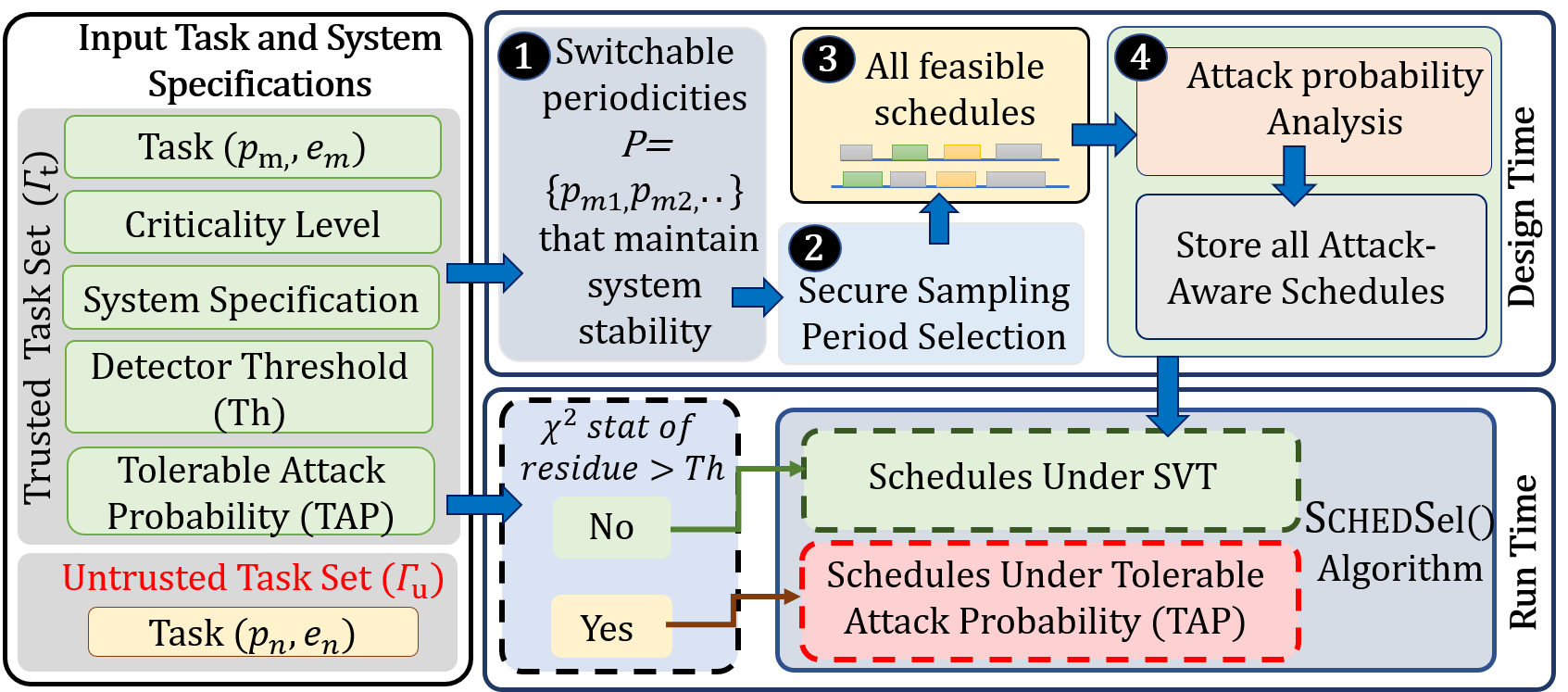}}
\caption{Overview of the proposed MAARS Framework}
\vspace{-4mm}
\label{fig: Overview_figure}
\end{figure}
A step-wise overview of the MAARS methodology is shown in Fig.~\ref{fig: Overview_figure}. During the design time, our proposed framework takes the following as inputs: {\bf (i)} a task set $\Gamma$ consisting of both trusted and untrusted set of tasks, $\Gamma_t,\Gamma_u (\subseteq \Gamma)$ respectively, {\bf (ii)} trusted task specifications, i.e., criticality level/ priorities $c_i$, a {\em controllable, schedulable} set of periodicities $P_i$, WCRT $e_i$ of each trusted task $\tau_i\in \Gamma_t$, {\bf (iii)} trusted task specifications, i.e., criticality level or priority $c_j$, periodicity $p_j$, WCRT $e_j$ of each untrusted task $\tau_j\in \Gamma_u$, {\bf (iv)} physical system specifications corresponding to each control task $\tau_i\in \Gamma_t$, its performance specification, {\bf (v)} other tunable design parameters related to MAARS, like \emph{Tolerable Attack Probability}, \emph{Maximum Schedule Vulnerability Index}, etc. to constrain the vulnerability of MAARS as defined in Sec.~\ref{section:schedule_analysis}

As motivated in Sec.~\ref{sec:motivating_example} switching between the allowable set of sampling periods $P_i$ reduces the determinism (explained as {\em inferability} in Sec.~\ref{subsec:Secure_Sampling_Period_DIscussion}) in the real-time task execution pattern. The {\em first} step in our methodology is to prune a set of periodicities $P_i$ for a given critical/trusted task such that the system, in closed loop with this control task, delivers its desired performance (see Fig.~\ref{fig: Overview_figure}).
In the {\em second} step, we intelligently select a subset of secure sampling periods from this performance-aware set of periodicities, such that the attacker's schedule analysis is maximally hampered. We theoretically prove that the devised secure periodicity selection policy maximizes the non-determinism in the execution pattern of the victim control task, by preventing the attacker from inferring critical task parameters. The {\em third} step uses these performance-aware and secure set of sampling rates to generate all possible valid task schedules by following the schedule randomization strategy given in~\cite{yoon2016taskshuffler}.
These \emph{valid} schedules are 
 generated by ensuring that no task misses their deadlines while switching between the periodicities. In the {\em fourth} step, we analyse the probability of posterior schedule-based attack (SBA) on each victim task for each schedule. By quantifying the vulnerability of each schedule, we rank and suitably store the schedules for using them in runtime to prevent posterior SBAs.

During run-time, MAARS deploys a secure schedule selection algorithm. As discussed earlier, each control task is implemented with a windowed $\chi^2$-detector that flags any attack attempt on its corresponding closed-loop ({\em box with black-dotted outline} in Fig.~\ref{fig: Overview_figure}, refer to Sec.~\ref{sec:detector_model}). Under no attack, MAARS randomly selects and deploys different schedules from a designer-selected set of valid schedules with minimum vulnerability. We term this mode as \emph{Normal Mode} ({\em green-dotted outlined} box in Fig.~\ref{fig: Overview_figure}). MAARS switches to an \emph{Alert Mode} when the detector flags an attack on a certain victim task $\tau_i\in \Gamma_t$({\em red-outlined} box in Fig.~\ref{fig: Overview_figure}). In this mode, the schedule selection algorithm switches to the least vulnerable schedule for victim task $\tau_i$ to ensure minimum posterior attack success. In the following sections, we discuss each design time step and run-time algorithm in detail.
%
%
\vspace{-1.5mm}
\subsection {Performance-aware multi-rate Period Selection}\label{subsecClf}
To ensure a desired performance while arbitrarily switching between the closed loops with a {\em schedulable} sampling periods to reduce the \emph{inferability} of victim task parameters, we must constrain our switching choices.
Let us consider, for the augmented closed loop system (as mentioned in Eq.~\ref{eq:state_space}) corresponding to a control task $\tau_i\in \Gamma_t$, this sampling period switching signal is $\sigma_i:\mathbb{N}\mapsto P_i$, where $P_i=\{p1_i,p2_i,\cdots,pn_i\}$ represents the chosen set of non-zero sampling periods. In presence of this switching signal, we can express Eq.~\ref{eq:state_space} for the control task $\tau_i$ as
 \begin{align}
 \label{Eq:Switched_System}
     X^{(i)}(k + 1) = \mathbb{A}^{(i)}_{\sigma_i(k)}(X^{(i)}(k))
 \end{align}
the switching signal $\sigma_i(k)=pj$ signifies that at $k^{th}$ sampling instance, this augmented system switches to the controller designed with sampling period $pj$. For a switched system like Eq.~\ref{Eq:Switched_System}, each such switchable closed loop systems are considered as its subsystems. As stated in~\cite{liberzon1999basic}, the following condition must be satisfied for maintaining a desired performance while arbitrarily switching between the closed loop subsystems.
\begin{claim}
\label{clmclf}
    In the case of a switched system like Eq.~\ref{Eq:Switched_System}, in order to maintain a desired {\em global uniform exponential stability (GUES)} decay rate of $\gamma$ while arbitrarily switching among different sub-systems, there must exist a common Lyapunov function (CLF) for all its arbitrarily switchable subsystems that satisfy the following set of equations
    \begin{align}
\label{eq: LMIs}
\scriptstyle
    &\kappa_1(\Vert X^{(i)}[k]\Vert)\leq V_i(X^{(i)}[k])\leq \kappa_2(\Vert X^{(i)}[k]\Vert)\\\nonumber
    &\triangle V_i(X^{(i)}[k])\leq \alpha_i V_i((X^{(i)}[k]),\text{\small for }\alpha_i=g(\mathbb{A}^{(i)}_{pj})\text{\small s.t. }\sigma_i(k)=pj
\end{align}
\end{claim}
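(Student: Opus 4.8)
The plan is to establish the stated necessity direction — that GUES under arbitrary switching forces a common Lyapunov function to exist — via an explicit converse-Lyapunov construction tailored to the linear switched system of Eq.~\ref{Eq:Switched_System}. First I would unpack the GUES hypothesis into its quantitative form: there exist constants $C\geq 1$ and $\gamma\in(0,1)$ such that for every switching signal $\sigma_i$, every initial state $X^{(i)}[0]$, and every $k\geq 0$, the trajectory $X^{(i)}_{\sigma_i}[k]$ obeys $\Vert X^{(i)}_{\sigma_i}[k]\Vert \leq C\gamma^k\Vert X^{(i)}[0]\Vert$. The point to stress is the word \emph{uniform}: the same $C,\gamma$ work for all switching signals simultaneously, and this uniformity is exactly what will make the candidate function finite.

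Next I would fix a rate $\mu\in(\gamma,1)$ and define the candidate CLF as a supremum over all future trajectories, $V_i(X)=\sup_{\sigma_i}\sum_{k=0}^{\infty}\mu^{-2k}\Vert X^{(i)}_{\sigma_i}[k]\Vert^2$, with $X^{(i)}_{\sigma_i}[0]=X$. Finiteness and the sandwich bounds in the first line of Eq.~\ref{eq: LMIs} follow at once: the $k=0$ summand gives $V_i(X)\geq\Vert X\Vert^2$, while the GUES bound gives $\Vert X^{(i)}_{\sigma_i}[k]\Vert^2\leq C^2\gamma^{2k}\Vert X\Vert^2$, so summing the geometric series (legitimate since $\gamma/\mu<1$) yields $V_i(X)\leq \frac{C^2}{1-(\gamma/\mu)^2}\Vert X\Vert^2$. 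Hence $\kappa_1,\kappa_2$ may be taken as the quadratic class-$\mathcal{K}$ functions $\kappa_1(s)=s^2$ and $\kappa_2(s)=\frac{C^2}{1-(\gamma/\mu)^2}s^2$; the degree-two homogeneity of each trajectory in $X$ (linearity) is what makes these comparison functions quadratic.

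The heart of the argument is the decrease condition in the second line of Eq.~\ref{eq: LMIs}, which I would obtain from a shift-invariance property of the admissible switching signals. For any fixed mode $pj$, every trajectory emanating from $\mathbb{A}^{(i)}_{pj}X$ is the one-step time shift of a trajectory emanating from $X$ whose first mode is $pj$; since the set of switching signals is closed under prepending a mode and shifting, the supremum defining $V_i(\mathbb{A}^{(i)}_{pj}X)$ is dominated by the tail, from $k=1$ onward, of the supremum defining $V_i(X)$. Re-indexing and carrying the $\mu^{-2k}$ weight through the shift yields $V_i(\mathbb{A}^{(i)}_{pj}X)\leq \mu^2\bigl(V_i(X)-\Vert X\Vert^2\bigr)\leq \mu^2 V_i(X)$, i.e. $\triangle V_i(X)\leq(\mu^2-1)V_i(X)$. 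This is exactly the required inequality with $\alpha_i=g(\mathbb{A}^{(i)}_{pj})=\mu^2-1<0$, holding uniformly over every subsystem, and the decay constant is governed by $\mu$, which may be pushed arbitrarily close to the prescribed $\gamma$.

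Collecting these facts shows $V_i$ is a single function satisfying Eq.~\ref{eq: LMIs} for \emph{every} switchable subsystem, which is precisely the asserted existence of a CLF; the logical direction GUES $\Rightarrow$ CLF delivers the necessity in the claim. I expect the main obstacle to be the rigorous handling of the supremum in the decrease step — justifying the prepend-and-shift manipulation of switching signals and the extraction of the $k=0$ term under the supremum, and confirming that the dominating inequality is tight enough to preserve the strict decrease. A secondary subtlety is that the $V_i$ built this way is only guaranteed continuous rather than smooth; I would note that Eq.~\ref{eq: LMIs} asks only for the sandwich and difference inequalities, so smoothness is not needed, though a Massera-type smoothing could be appended as a final step if a differentiable CLF were demanded.
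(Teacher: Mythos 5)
Your proposal is essentially correct, but there is nothing in the paper to compare it against: the paper never proves Claim~1 at all. It invokes the statement as a known result, citing Liberzon's work on switched systems, and immediately moves to the practical step of certifying a common \emph{quadratic} Lyapunov function $V_i(X)=X\mathcal{P}^{(i)}X^{T}$ by LMI feasibility. Your argument is the standard discrete-time converse-Lyapunov construction: with $\gamma<\mu<1$, the candidate $V_i(X)=\sup_{\sigma_i}\sum_{k\geq 0}\mu^{-2k}\Vert X^{(i)}_{\sigma_i}[k]\Vert^2$ is finite by the uniform GUES bound and the geometric series, giving the quadratic sandwich $\Vert X\Vert^2\leq V_i(X)\leq \frac{C^2}{1-(\gamma/\mu)^2}\Vert X\Vert^2$, and the prepend-and-shift manipulation (legitimate exactly because \emph{arbitrary} switching makes the signal class shift-invariant and closed under prepending a mode) yields $V_i(\mathbb{A}^{(i)}_{pj}X)\leq\mu^2\bigl(V_i(X)-\Vert X\Vert^2\bigr)$, hence the decrease condition with $\alpha_i=\mu^2-1<0$ uniformly over modes; I verified these steps and they hold. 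What your proof buys over the paper's citation is self-containedness and explicit constants (the decrease rate can be pushed toward $\gamma^2-1$ at the cost of a blowing-up overshoot factor). Two caveats are worth recording. First, you prove the necessity direction (GUES $\Rightarrow$ CLF), which is what the claim literally asserts; but the direction the paper actually uses downstream is sufficiency (an LMI-certified CLF $\Rightarrow$ GUES under arbitrary switching), which is the easier standard Lyapunov-decrease argument and is absent from your write-up --- adding those few lines would make the proposal serve the paper's use of the claim as well as its letter. Second, your constructed $V_i$ is homogeneous of degree two but in general not quadratic; this is not a defect but unavoidable, since GUES of a switched linear system does not imply existence of a common \emph{quadratic} Lyapunov function. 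Consequently your converse argument cannot be used to justify the paper's restriction to CQLFs: the LMI test the paper performs is sufficient but conservative, and your proof makes that gap visible rather than closing it.
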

\noindent Here, $\kappa_1$,$\kappa_2$ are class $\mathcal{K}_\infty$ functions, and $g$:$\{\mathbb{A}^{(i)}_{p1},\cdots, \mathbb{A}^{(i)}_{pn}\}\mapsto$ $\mathbb{R}\setminus \{0\}$. {\em Global uniform exponential stability (GUES)} with a decay rate $\gamma<0$ signifies that at $k^{th}$ sampling iteration $\Vert X(k)\Vert\leq M e^{\gamma k}~\Vert x[0]\Vert$, where $\Vert.\Vert$ is vector norm and $M >0$. We consider a common quadratic Lyapunov function (CQLF) $V_i(X^{(i)})=X^{(i)} \mathcal{P}^{(i)} {X^{(i)}}^{T}$. We validate the existence of this CQLF for all switchable closed loops by solving the following {\em linear matrix inequalities (LMIs)} and finding out a positive definite matrix $\mathcal{P}^{(i)}>0$ such that for given $\alpha_j,\ {\mathbb{A}^{(i)}_{pj}}^{T}\mathcal{P}^{(i)}\mathbb{A}^{(i)}_{pj}-\mathcal{P}^{(i)}\leq \alpha_{j}\mathcal{P}^{(i)}\  \forall pj\in P_i$. Given a large set of periodicities, we only consider the set of sampling periods for which their corresponding closed loops have a CQLF that follows Claim~\ref{clmclf} for a given performance criteria. This ensures that the desired performance is guaranteed while we devise an arbitrary switching strategy between these closed loops to minimize the vulnerabilities of the task execution schedules against posterior SBAs.
\subsection{Secure Sampling Period Selection and Schedule Generation}
\label{subsec:Secure_Sampling_Period_DIscussion}
We devise a novel periodicity selection policy to further prune the set of performance-aware sampling periodicities for a victim control task to maximally reduce the leakage of its critical task parameters (eg,. arrival times, execution offsets).
As discussed earlier, the attacker is capable of figuring out the victim task's sampling rate by observing the network traffic or characteristics of the physical plant~\cite{chen2019novelsidechannel}. 
The attacker chooses the smallest periodicity of the victim task to construct its schedule ladder as the smallest row length of the ladder increases the cardinality of Attack Arrival Instances ($AAI$) to infer the correct future arrival instances of the victim (refer Fig.~\ref{fig:ladder_analysis}). 
As discussed earlier, an attacker with a longer execution time is a more suitable choice since it increases the chance of getting preempted by a victim instance and, thereby, increasing 
the cardinality of Attack Execution Instances ($AEI$). 
%
\par We defend against such an attacker by suitably choosing a set of sampling periods for the higher-priority victim tasks so that the chances of preemption are reduced (recall that an attacker infers the timing information of a victim task based on its preemption by the victim). To quantify the \emph{inferability} of victim task arrivals, considering an attacker task in a schedule, we introduce a metric \emph{inferability ratio} that represents the number of preemptions by the victim task. 

\begin{definition}(Inferability Ratio)
    The Inferability Ratio (IR) of a schedule $s$ for a victim task, $\tau_i$ having minimum sampling rate $p_i$ and an attacker task $\tau_j$, is defined as the ratio between the total number of columns in the schedule ladder where the attacker task $\tau_j$ arrives and gets executed
    and the total number of columns where the attacker task $\tau_j$ arrives. Mathematically,  Inferability Ratio $IR = \frac{\vert AEI\vert \% \vert AAI \vert}{\vert AAI\vert }$, where $\vert \cdot\vert$ denotes cardinality of a set. The numerator uses the modulo operation to reset the numerator when the attacker executes wherever it arrives, indicating no inferability due to no preemptions by higher-priority tasks.
\end{definition} 
If an attacker task is executed in {\em more} number of columns in a schedule ladder designed for a victim task, the IR of that schedule, w.r.t. the victim, and its attacker choice is {\em high}. This limits the number of columns where the attacker task arrives but does not get executed due to preemption by higher-priority victim tasks. Therefore, it is easier for the attacker to guess the arrival column of the victim task. Intelligent selection of sampling rates can reduce the IR by reducing the number of columns where the attacker task is preempted by the victim task, i.e., the attacker arrives but does not execute. 
We claim the following to ensure reduced preemption of the attacker task by suitable selection of victim periodicities: 
%
\begin{claim}
\label{claim:sampling_period_analysis}
Consider an attacker task $\tau_j \in \Gamma_u$ and a victim control task $\tau_i \in \Gamma_t$. For $\tau_i$, let the set of periodicities allowed be $P_i$ with minimum sampling period $p_i = \min \{p \mid p \in P_i \}$. In this scenario, the generated fixed-priority preemptive schedules shall have the least Inferability Ratio (IR) if for each $ p' \in P_i \setminus \{p_i\}$: $p' = n p_i + e_j, n \in \mathbb{N}^+$.
\end{claim}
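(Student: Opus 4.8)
The plan is to show that the stated periodicity condition forces the equality $|AEI| = |AAI|$, which makes the Inferability Ratio vanish and hence attain its global minimum. First I would record the elementary fact that, since every column in which the attacker is present originates from an arrival, $AEI \subseteq AAI$, so $|AEI| \le |AAI|$. Reading the definition $IR = \frac{|AEI| \bmod |AAI|}{|AAI|}$, whenever $|AEI| < |AAI|$ the modulo is inert and $IR = |AEI|/|AAI| > 0$, while $|AEI| = |AAI|$ collapses the numerator to $0$ and yields $IR = 0$. Thus $IR$ is minimized exactly when the attacker task executes, over the whole observation window, in every ladder column in which it is ever present. The entire proof therefore reduces to establishing this full-coverage property under the hypothesis $p' = n p_i + e_j$.

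Next I would pin down how the victim's arrival column moves across the ladder. The ladder has row length $p_i$ (in units of $\delta$), so the column occupied by a victim arrival at time $t$ is $t \bmod p_i$. If between two successive victim releases the sampling period is $p'$, the arrival column advances by $p' \bmod p_i$; under the hypothesis $p' = n p_i + e_j$ with $n \in \mathbb{N}^+$ (and using $e_j < p_i$, which I would note follows from schedulability) this advance equals exactly $e_j$. Hence each use of the alternate period $p'$ shifts the victim's arrival column forward by one full attacker-execution width, and repeated use sweeps the victim's arrival across the ladder in steps of size $e_j$.

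The core of the argument is then a coverage claim: because the attacker's execution footprint spans exactly $e_j$ consecutive columns and the victim's preempting arrival advances by the same amount $e_j$ per period, every column in which the attacker is present is vacated by the victim within one step of the sweep. Concretely, I would fix an arbitrary presence column $x \in AAI$ and track the attacker's (possibly preempted and displaced) execution block across consecutive periods; I would show that in the period in which the victim's arrival column has swept past $x$ by a full width $e_j$, the attacker is no longer preempted at $x$ and therefore executes there, so $x \in AEI$. Since $x$ was arbitrary, $AAI \subseteq AEI$, giving $|AEI| = |AAI|$ and $IR = 0$, the least attainable value.

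The hard part will be making this coverage argument fully rigorous, because preemption is dynamic: when the victim lands inside the attacker's would-be execution window the attacker's remaining work is pushed to later columns, so the set $AEI$ is not simply the static block $[a, a+e_j)$ but the union of these displaced blocks over all observed periods. I would need to track this displacement carefully, handle the wraparound of column indices modulo $p_i$ at row boundaries, and treat the interaction with the attacker's own period $p_j$ (i.e.\ whether the attacker's arrival column is itself fixed or drifting relative to the ladder). Verifying that the step size $e_j$ -- rather than any smaller shift, which would let the victim linger within the attacker's window for several consecutive periods and permanently block its leading columns -- is precisely what guarantees that the union of displaced execution blocks exhausts $AAI$ is the delicate combinatorial step on which the whole claim rests.
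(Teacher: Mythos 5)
Your reduction of the claim to a full-coverage property ($IR$ attains its least value $0$ exactly when the attacker executes, over the observation window, in every column where it ever arrives) is a fair reading of the paper's modulo convention, and your observation that $p' = np_i + e_j$ makes the victim's arrival column drift by exactly $e_j$ per release is the right structural fact. But the proposal stops precisely where the proof must start: the core coverage claim --- that every column of $AAI$ is eventually vacated by the victim and actually executed in, i.e.\ $AAI \subseteq AEI$ --- is only announced (``I would show\dots''), its difficulties are catalogued (displaced execution blocks, wraparound modulo $p_i$, the drift of the attacker's own arrival column), but no argument is given. Since that inclusion is the entire mathematical content of the claim, what you have is a proof plan with the theorem still inside it, not a proof. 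Two of your supporting steps are also shaky: the ``elementary fact'' $AEI \subseteq AAI$ is false in general, because a blocked or preempted attacker executes in columns where it never arrives (e.g.\ if $p_j$ is a multiple of $p_i$, the attacker arrives in a single column, and whenever the victim occupies that column the attacker's execution lands entirely outside $AAI$); and $e_j < p_i$ does not follow from schedulability (which only gives $e_j \le p_j$), though the paper's Remark~1 tacitly assumes it as well.

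For comparison, the paper takes a different and shorter route: rather than allowing preemptions and arguing eventual coverage, it argues there are \emph{no} preemptions at all. It writes the condition for the $k$-th victim release (at time $kp'$) to preempt the attacker job it lands on, namely $\lfloor kp'/p_j\rfloor p_j + e_j > kp'$, substitutes $p' = np_i + e_j$, and manipulates floors to conclude the reverse inequality holds for every $k$; per-instance non-preemption then yields ``the attacker executes wherever it arrives'' with no temporal sweep, no tracking of displaced blocks, and no assumption on the observation length. Be aware, though, that this per-instance inequality is itself delicate: it fails whenever $kp'$ is a common multiple of $p'$ and $p_j$, where victim and attacker arrive simultaneously and the attacker is blocked rather than preempted (e.g.\ $p_i=4$, $e_j=1$, $p'=5$, $p_j=3$, $k=3$ gives $\lfloor 15/3\rfloor\cdot 3 + 1 = 16 > 15$). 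In exactly those cases it is your drift argument that restores coverage in later rows. So your route is not wrong in spirit --- it addresses a situation the paper's algebra glosses over --- but as submitted it establishes nothing: the ``delicate combinatorial step'' you defer \emph{is} the claim.
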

\begin{proof} We consider that an attacker constructs the schedule ladder with a row length equal to the minimum sampling period $p_i$ of the victim task $\tau_i\in\Gamma_t$ and observes the AAI and AEI of an attacker task $\tau_j\in\Gamma_u$. 
The duration between the attacker task's arrival 
and the subsequent victim task's arrival 
repeats after every $LCM(p_i,p_j)$ unit of time. Note that the $e_j$ time units
are covered by every execution instance of the attacker task. As long as this 
the duration between the attacker's and the subsequent victim's arrival is smaller than $e_j$ 
time units, the attacker task will be preempted by the victim task, and the same preemption will repeat 
after every multiple of $LCM(p_i,p_j)$ in the timeline. 
%
\par Now, let us consider that at $k$-th instance of the victim task, there is a predicted chance of preemption while the victim is running with some $np_i$ periodicity where $n, k\in\mathbb{N}^+$. Therefore, this $k$-th victim instance, that arrives at $knp_i$ time  may preempt the $\lfloor \frac{knp_i}{p_j}\rfloor$-th 
instance of the attacker task. Therefore, for successful preemption, the $k$-th victim task instance must {\em arrive before the execution of  $\lfloor \frac{knp_i}{p_j}\rfloor$-th attacker instance is finished, i.e., $\lfloor \frac{knp_i}{p_j}\rfloor p_j+e_j > knp_i$}
Now, consider changing the periodicity of the victim task to some $p^{\prime}_i$, to avoid this preemption. Therefore $p^{\prime}_i$ must ensure that the 
$k$-th victim instance arrives after the $\lfloor \frac{kp^{\prime}_i}{p_j}\rfloor$-th attacker instance finishes, i.e., $\lfloor \frac{kp^{\prime}_i}{p_j}\rfloor p_j+e_j \leq kp^{\prime}_i $. By substituting $p^{\prime}_i$ with $np_i+e_j$, this equation becomes as follows.
\begin{align*}
\scriptstyle
k(np_i+e_j) &\scriptstyle\geq \lfloor \frac{k(np_i+e_j)}{p_j}\rfloor p_j+e_j = \lfloor \frac{k(np_i)}{p_j}+\frac{k(e_j)}{p_j}\rfloor p_j+e_j\\
&\scriptstyle\geq \lfloor \frac{k(np_i)}{p_j}+ \lfloor \frac{ke_j}{p_j}\rfloor + ke_j \% p_j\rfloor p_j+e_j\text{\scriptsize, $\frac{a}{b}$ = $\lfloor\frac{a}{b}\rfloor + a\%b$, } \\
&\scriptstyle\geq \lfloor \frac{k(np_i)}{p_j}\rfloor p_j+ k e_j + ke_j \% p_j+e_j\text{\scriptsize,  since $ k, e_j, p_j \in \mathbb{N}^+$}\\
&\scriptstyle\geq knp_i-knp_i\%p_j+ke_j-ke_j\%p_j\\
&\scriptstyle\geq k(np_i+e_j)-k(np_i\%p_j-e_j\%p_j)
\end{align*}
This always holds true since $k(np_i\%p_j-e_j\%p_j)>0$, because for $\tau_j$ to be schedulable $p_j>e_j$. Note that, $e_j,\ p_j$ are considered as whole numbers since we express them in multiples of $\delta$, and the analysis assumes $\delta$ as the notion of unit time. Therefore, this choice of periodicity $p^{\prime}_i = np_i+e_j$ ensures no preemption of the attacker by any $k$-th victim instance. This, in turn, proves the claim that scheduling the victim with a set of periodicities $P_i$ such that $\forall p^{\prime}\in P_i\{p_i\}, p^{\prime}_i= np_i+e_j$ ensures the least IR for an attacker task $\tau_j\in\Gamma_u$ in any execution schedule.
%
\end{proof}
\noindent The periodicity selection criteria used in Claim~\ref{claim:sampling_period_analysis} heavily restricts the periodicity choices since higher values of $n\in\mathbb{N}^+$ are often not supported by Claim~\ref{clmclf}. We update the criteria as follows in order to get multiple periodicity choices for a victim task with reduced IR. 
\begin{remark}\label{remark1}
For a victim task $\tau_i$, let the set of periodicities allowed be $P_i$ with minimum sampling period $p_i = \min \{p \mid p \in P_i \}$. In this scenario, the generated fixed-priority preemptive schedules shall have a {\em reduced} Inferability Ratio if for each $ p' \in P_i \setminus \{p_i\}$: $p' = n p_i + k', n \in \mathbb{N}^+$, such that $k'\in [e_j,p_i-1]$ considering an attacker task $\tau_j$ with periodicity $p_j$ and execution time $e_j$. Since, $k'\geq e_j$, it ensures zero preemption against any attacker, i.e., $IR=0$. Since, $k\leq (p_i-1)$, the predicted preemption after every $LCM(p_i,p_j)$ time is not repeated and delayed. This clearly reduces the cardinality of AEI, ensuring the reduction in IR compared to the situation when the victim task was scheduled with $p_i$.
\end{remark}
\noindent We thus prune the performance-aware set of periodicities $P_i$ to make it security-aware. For each victim task $\tau_i$, we eliminate the periodicities from the set $P_i$ that do not satisfy the criteria given in Remark~\ref{remark1}. 


%
%
%
\noindent Once all the possible sets of $P_i$ for each victim control task $\tau_i$ are pruned, we utilise \emph{TaskShuffler} ~\cite{yoon2016taskshuffler}, a schedule randomization technique. 
Note that after previous steps our task set $\Gamma=\Gamma_t \cup \Gamma_u$, has a set of period choices for tasks in $\Gamma_t$. Each control task $\tau_i \in \Gamma_t$ is assigned a performance-aware and secure set of sampling periodicities, i.e. $P_i= \{p_i^1,p_i^2...,p_i^{n_1}\}$. 
For $|\Gamma_t | = q$, the number of task specifications for $\Gamma$ is given by: $|P_1| \times \cdots \times |P_q|$, as the periods of tasks $\in \Gamma_u$ are assumed as unique.  Each task specification is given as input to the \emph{TaskShuffler} algorithm for generating the possible set of randomized as well as feasible schedules where no job instance misses the deadline. We denote the overall set of randomized schedules generated considering all task-specifications by $\mathcal{S}$.

\subsection{Schedule Vulnerability Analysis}
\label{section:schedule_analysis}
%
In order to characterize each \emph{valid}/feasible fixed-priority preemptive schedule's vulnerability level, we first formalise the task schedule $s$ as an array, $s = \{s[1], s[2],\cdots,s[l]\}$, $\forall \ s[j] \in \{0,1,\cdots, N\}$. Here, $l$ is the number of time units (each time unit of size $\delta$, same as the column sizes of the schedule ladder, refer to Sec.~\ref{secAtkexec}) in the schedule hyper-period (the time duration after which a static task schedule repeats itself). The $j^{th}$ element $s[j]$ in this array denotes the priority of the task executed at $j^{th}$ time unit, i.e., $s[j] = i\Rightarrow \tau_i\in\Gamma$ is executed at $\delta j$ time. We use $0$ to denote an idle period in the schedule, and $i \in \{0,1,...N\}$ denotes the fixed priority assigned to the executed task at a certain time unit. For example, $s=\{1,2,1,3\}$ refers to a fixed-priority preemptive schedule for a task set~\ref{tab:motivating_task_set}. Job 1 of task 1 with priority 1 executes first in the execution order, then job 1 of task 2 with priority 2, and so on. 
\par Given any schedule $s\in \mathcal{S}$, executing multiple trusted and untrusted tasks from the task set $\Gamma$, first we count the total number of possible posterior attacks on $s$. Given a victim task $\tau_i$ with an AEW length of $\Omega_i$, Eq.\ref{Eq:Attack_counter} gives a total count of all possible posterior attacks on task $\tau_i$ denoted by $C_p(\tau_i)$ in a schedule $s$ of length $l$. 
\begin{align}
\label{Eq:Attack_counter}
\scriptstyle
    C_p(\tau_i) = \sum_{j=1}^l I\left( (s[j]==i) \land \bigvee\limits_{\forall k \in [1, \Omega_i]} \exists \tau_p\in \Gamma_u, (s[j+k]==p) \right)
\end{align}
\noindent Here, $I(x)$ is an indicator function defined as follows: $I(x)=1$ if $x$ is true, else $I(x)=0$ for a proposition $x$ that returns true/false. The indicator function returns 1 if any job from $\tau_p \in \Gamma_u$ executes in any of the AEW $\Omega_i$ of task $\tau_i\in \Gamma_t$. 
Note that even though the attacker can compromise a single untrusted task, the system designer analyses the vulnerability of a posterior attack on each trusted task $\tau_i\in \Gamma_t$ by assuming every untrusted task $\tau_j\in \Gamma_u$ as a potential attacker.  
%
\par Since sampling rates of all the control tasks $\tau_i \in \Gamma_u$ across different schedules vary, the number of job executions of the task that arrive in each schedule hyper-period will also vary. Therefore, it is necessary to express this posterior attack counts on a trusted task $\tau_i$ for a schedule $s_k\in\mathcal{S}$ in probabilistic terms. 
We denote this as the \emph{Attack Probability}, which is defined as follows:
\begin{definition}(Attack Probability)
\label{Def: attack_probability}
     Given a valid schedule $s_k(\in \mathcal{S})$ generated for a task set $\Gamma$, the Attack Probability $AP_{<\tau_i,s_k>}$ of a victim task $\tau_i \in  \Gamma_t\subseteq\Gamma$ is the ratio between the total number of attacker tasks getting executed within the AEW length $\Omega_i$ of victim control task $\tau_i$,  and the total number of job-arrivals of the task $\tau_i$ in the schedule hyper-period of length $l_k$. Mathematically, $AP_{<\tau_i,s_k>} =  \frac{C_p(\tau_i) \ \times p_i}{l_k}$. 
\end{definition}
%
\noindent For each trusted control task  $\tau_i\in\Gamma_t$, the $AP_{<\tau_i,s_k>}$ is different in a schedule $s_k$.
For each safety-critical control task $\tau_i \in \Gamma_t$, we define its \emph{Tolerable Attack Probability} (TAP) as the maximum attack probability on a schedule $s$ that can be tolerated by the closed-loop system, associated with $\tau_i$ such that the plant operates in a safe and preferable operating region. For a control task $\tau_i$, we denote its TAP by $TAP_{i,s_k}$.
\par In a mixed-criticality system, the system designer chooses the criticality values of the tasks based on their importance to the system's operations and how their compromise would affect the system's overall performance~\cite{lee2021mathsf}. Considering tasks with a higher index have a lower priority value, the $q$ safety-critical tasks $\{ \tau_1, \tau_2,.., \tau_q \}$ are arranged in decreasing order of priorities.  We assign them with their criticality values $\{q, (q-1),\cdots,2, 1\}$ respectively. We normalize the criticality values by their total sum of criticality values to derive a \emph{criticality level} of each task. Therefore, we denote the normalized {\em criticality levels} for a task $\tau_i$ as $cl_i = c_i/ \sum_{j=1}^{k}c_j, \ \forall \tau_j\in \Gamma_t$. Since AP is task-specific, a schedule may have a minimum attack probability associated with one safety-critical task, whereas it might have the highest AP for the other safety-critical tasks. To secure the schedule w.r.t. all victim tasks, we use these criticality levels to decide the APs of which higher criticality tasks should be given more importance while quantifying the level of vulnerability of the overall task execution schedule. We define a \emph{Schedule Vulnerability Index (SVI)} for each schedule that is expressed as a weighted (with criticality levels) average of the APs of all high-criticality tasks.
\begin{definition}
\label{Def:Schedule_Vulnerability_Index} (Schedule Vulnerability Index)
Given a schedule $s_k \in \mathcal{S}$ generated by a task set $\Gamma$, its \emph{Schedule Vulnerability Index}, denoted by the symbol $SVI_k$, is the weighted sum of attack probability $AP_{<\tau_i,s_k>}$, weighted with normalized task criticality levels $c_i$ for each control task $\tau_i$.
Mathematically, we can write 
$SVI_k= \sum\limits_{\forall \tau\in\Gamma_t} AP_{<\tau_i,s_k>}  \times cl_i \ \forall \tau_j\in \Gamma_t$
\end{definition}
Similar to TAP, we intend to find a tolerable upper limit of SVI. We term this as \emph{Schedule Vulnerability Threshold} (SVT). The schedules having SVI above SVT are not {\em normally} (we shall define what we mean by normally in the next section) deployed in order to reduce vulnerability against posterior SBAs. This ensures that the attack probability thresholds of the higher-criticality (here same as priority) trusted tasks are taken more into consideration during this vulnerable schedule elimination process than that of the lower-criticality victim tasks. Therefore, the \emph{Schedule Vulnerability Threshold} denoted by $SVT$ can be determined by replacing the Attack Probability with TAP as $SVT_k = \sum\limits_{i=1}^q TAP_i \times cl_i$.
%
The attack-aware schedules generated using the methodology mentioned in Sec.~\ref{subsec:Secure_Sampling_Period_DIscussion} are pre-computed and stored along with their SVIs and victim task-wise AP values after this vulnerability analysis (See Fig.~\ref{fig: Overview_figure}). This process eliminates the need for complex online computations in real-time, thereby minimizing the runtime overhead associated with schedule randomization. 
All valid fixed-priority schedules are stored in an array $\mathcal{A}$, sorted in the order of their respective $SVI$s. At the $K$-th index, the schedule with $SVT$ value is stored. Another data structure, \emph{Schedule Lookup Table} (denoted with $LUT$),  stores task-wise an array of schedule indices (from $\mathcal{A}$) which show less \emph{attack probabilities} w.r.t. a victim task than its TAP. An array of schedule indices stored in the $j$-th row of $LUT$ is sorted in the increasing order of AP w.r.t. task $\tau_j$. For example, if the $i_1, i_2$-th schedules from $\mathcal{A}$ (i.e., $\mathcal{A}[i_1]=s_{i_1},\ \mathcal{A}[i_2]=s_{i_2}$) are stored at $k, k^\prime$-th positions of $LUT[j]$ ($j$-th row in $LUT$) respectively, with $k< k^\prime$, then $AP_{\langle \tau_j,s_{i_1}\rangle}<AP_{<\tau_j,s_{i_2}>}\leq TAP_{j}$.  
In the next section, we discuss the online counterpart of the MAARS framework that uses these stored data structures after the design time analysis for randomized deployment of task execution schedules in an attack-aware manner.
\subsection{Schedule Selection Algorithm}\label{secAlgo}
\begin{algorithm}[!hb]
    \caption{Runtime Schedule Selection Algorithm}
    \label{algo:Schedule_select_algo}
    \scriptsize
    \begin{algorithmic}[1]
        \Require{Schedule Array $\mathcal{A}$, Schedule vulnerability threshold $SVT$, Lookup table $LUT$, Detector flag $Atkflag$, current schedule $s_k$}
        \Ensure{next attack-aware schedule $s_{k^{\prime}}$}
        \Function{SchedSel}{$\mathcal{A}$, $SVT$, $LUT$, $Atkflag$, $s_k$} \Comment{Function}
            \State $K \gets \Call{IndexOfSVI}{\mathcal{A}, SVT}$ \Comment{Storing schedule index with SVI=SVT}\label{algflaginit}
            \State $k \gets \Call{IndexOf}{ s_k }, k^{\prime}\gets -1$\Comment{current and next schedule indices init}  \label{algdtcupdate}
            \If{$AtkFlag==0$}  \label{algifAtkFalse}
                \While{$k^{\prime} \neq k$} \label{algnosame}
                \State $k^{\prime} \gets rand( )\%K$ \Comment{ random index between [0,K)}\label{algnormalmoderand}
                \EndWhile
                \State $s^{\prime} \gets \mathcal{A}[k^{\prime}]$\Comment{Normal Mode}\label{algnormalmode}
            \ElsIf{$AtkFlag >0$} $i\gets AtkFlag$   \label{algifAtkTrue}\Comment{If attack detected on $\tau_i$}
                \While{$k \neq k^{\prime}$}  \label{algnosame1}
                \State $M \gets \Call{LengthOf}{ LUT[i] }$\Comment{schedule count with AP below $TAP_i$ }\label{algalertmoderand0}
                \State $ki \gets rand( )\% M$, $k'\gets LUT[i][ki]$ \Comment{random index $\in$[0,M-1]}\label{algalertmoderand}
                \EndWhile
                \State $s_{k^{\prime}} \gets \mathcal{A}[LUT[i][k^{\prime}]]$   \Comment{Alert Mode}\label{algalertmode}
            \EndIf
            \State\Return $s_{k^{\prime}}$; \label{algReturn}
        \EndFunction
    \end{algorithmic}
\end{algorithm}
\par 
\noindent The runtime algorithm in the MAARS framework 
is responsible for randomly deploying schedules in an attack-aware fashion. The runtime schedule selection function $\Call{SchedSel}{ }$ in Algo.~\ref{algo:Schedule_select_algo} takes the following inputs, \textbf{(i) } The lookup table $LUT$ and the stored schedule array $\mathcal{A}$, \textbf{(ii) }the $\chi^2$-detector flag $Atkflag$, \textbf{(iii) }the schedule vulnerability threshold $SVT$ and \textbf{(iv) }the schedule deployed in the current hyper-period $s_k$. 
The algorithm starts by storing the schedule index in $\mathcal{A}$, which has the same SVI as the input SVT. We use the $\Call{IndexOfSVI}{ }$ method, which takes the sorted schedule array $\mathcal{A}$ and the $SVT$ as inputs to find out this index and store in $K$ (see line~\ref{algflaginit}). In line~\ref{algdtcupdate}, by using $\Call{IndexOf}{ }$ method we find out the index $k$ of the input schedule $s_k$ that is currently deployed. Another variable, $k'$, is also initialised (with $-1$) to store the index of the next schedule. 
%
The algorithm has two distinct modes of schedule selection and deployment operations depending on the detector flag $Atkflag$. 
\par \noindent \emph{\textbf{(1) Normal mode: }} When there are no attack flags raised, i.e., $Atkflag=0$ denoting no attack on any control task, the algorithm operates in {\em Normal Mode} (lines~\ref{algnosame} to~\ref{algnormalmode}). In this mode, schedules with $SVI$ less than $SVT$ are randomly selected from the array $\mathcal{A}$. Line~\ref{algifAtkFalse} of the algorithm checks if the $AtkFlag$ value is zero. If it is zero, then the algorithm selects a random number in the range $[0,K-1]$ and stores it in $k'$ (see line~\ref{algnormalmoderand}). Since MAARS randomly deploys new non-vulnerable schedules at every hyper-period, the random schedule index $k'$ must not match with the currently deployed schedule index $k$. The algorithm keeps picking another random number until $k\neq k'$ for this purpose.
(see the while loop in line~\ref{algnosame}). The next schedule $s_{k^\prime}$ is selected from the $k'$-th row of the schedule array $\mathcal{A}$ (see line~\ref{algnormalmode}).  
\par\noindent{\textbf{\em (2) Alert Mode: }} If the \emph{$\chi^2$-detector} running inside the control task raises an alarm by detecting an attack ($g[k]>Th\Rightarrow AtkFlag>0$, see Sec.~\ref{sec:detector_model}) on a control task $\tau_i$, the algorithm runs in {\em Alert mode}. In this mode, schedules are selected from the schedule array $\mathcal{A}$ such that the attack probability for the victim control task $\tau_i$ is lower than its tolerable attack probability $TAP_i$. In line~\ref{algifAtkTrue}, if $AtkFlag$ is non-zero, we first store its value in $i$. The value of the $Atkflg$ denotes a potential attack on task $\tau_i$. In line~\ref{algalertmoderand0}, we find out the length of the $i$-th row in $LUT$ and store it in $M$. This is used to find a random number $ki$ in the range $[0,M-1]$ (see line~\ref{algalertmoderand}). This number is then used to get the index of the next schedule $k'$ from the $ki$-th column and $i$-th row of $LUT$. Note that each such schedule index picked from LUT satisfies $AP_{<\tau_i,\mathcal{A}_k'>} < TAP_i$. As stated earlier, this random number selection is repeated if the index of the current schedule $k$ matches with the randomly picked next schedule index $k'$ (see the while loop in line~\ref{algnosame1}). We select the schedule $s_{k^\prime}$ from the $k'(\neq k)$-th row of $\mathcal{A}$ (see  line~\ref{algalertmode}). 
\par Finally, $\Call{SchedSel}{ }$ returns the new schedule $s_k'$ for deployment. Algo~\ref{algo:Schedule_select_algo} is run once every hyper-period in order to select and deploy new schedules randomly. This online schedule selection is done in an attack-informed way. As a result, the vulnerability level of the task execution schedule is minimized while maintaining the system's performance. 
\par \textbf{Overhead Analysis:}
The time complexity of Algo.~\ref{algo:Schedule_select_algo} depends on the complexity of selecting a random number so that it does not match the current value. Since we store the schedules in the sorted orders of SVI and AP in arrays, the algorithm can access them in $O(1)$.  
The total memory cost for storing $\mathcal{A}$ and $LUT$ can also be calculated, assuming that each element(of the arrays) takes $b$ data bytes. Each schedule, therefore, comprises rows equal to the length of its hyper-period, say $l_k$ for the $k$-th schedule. Considering a total of $M$ schedules, the total cost is $(\sum_{k=1}^{M} l_k) \cdot b $ bytes.
$LUT$, is stored as a 2-D array of $q$ rows (number of trusted control tasks) and $M$ columns, the cost is $qM$. Hence, the total memory cost is  $\left[ qM + \left( \sum_{k=1}^{M} l_k \right) \cdot M \right] \cdot b\ \text{\textit {bytes}}$.
In the next section, we demonstrate the efficacy of this runtime schedule selection algorithm guided by the design-time analysis in real-time system setup.
%

\begin{figure*}[!ht]
\vspace{-5mm} 
    \centering
    \begin{subfigure}[b]{0.19\linewidth}
        \centering
        \includegraphics[width=\textwidth,clip]{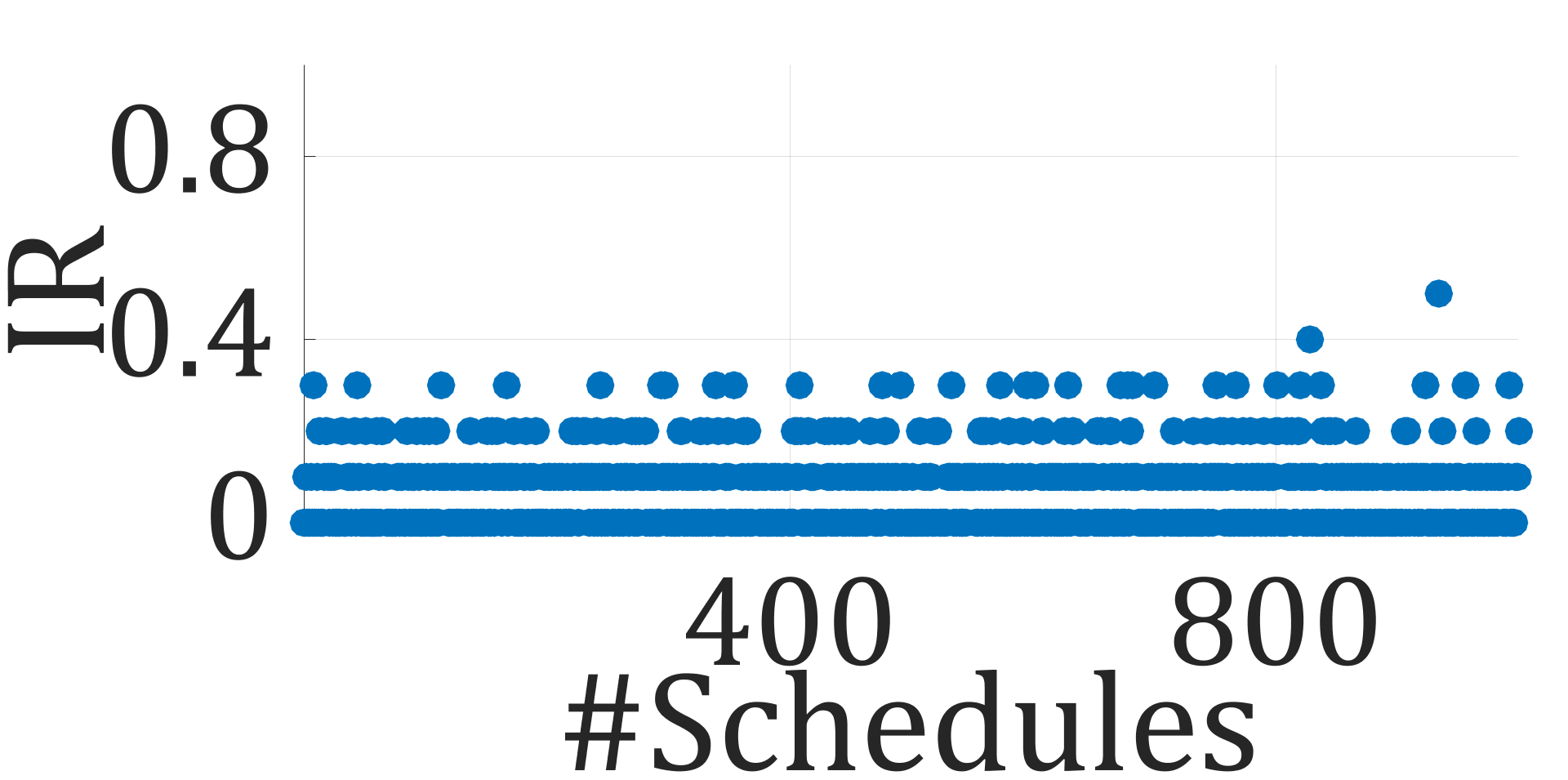}
        \caption{Victim:$\tau_1$, Attack:$\tau_5$}
        \label{fig:maars_lu_v1_a5}
    \end{subfigure}%
    \hfill
    \begin{subfigure}[b]{0.19\linewidth}
        \centering
        \includegraphics[width=\textwidth,clip]{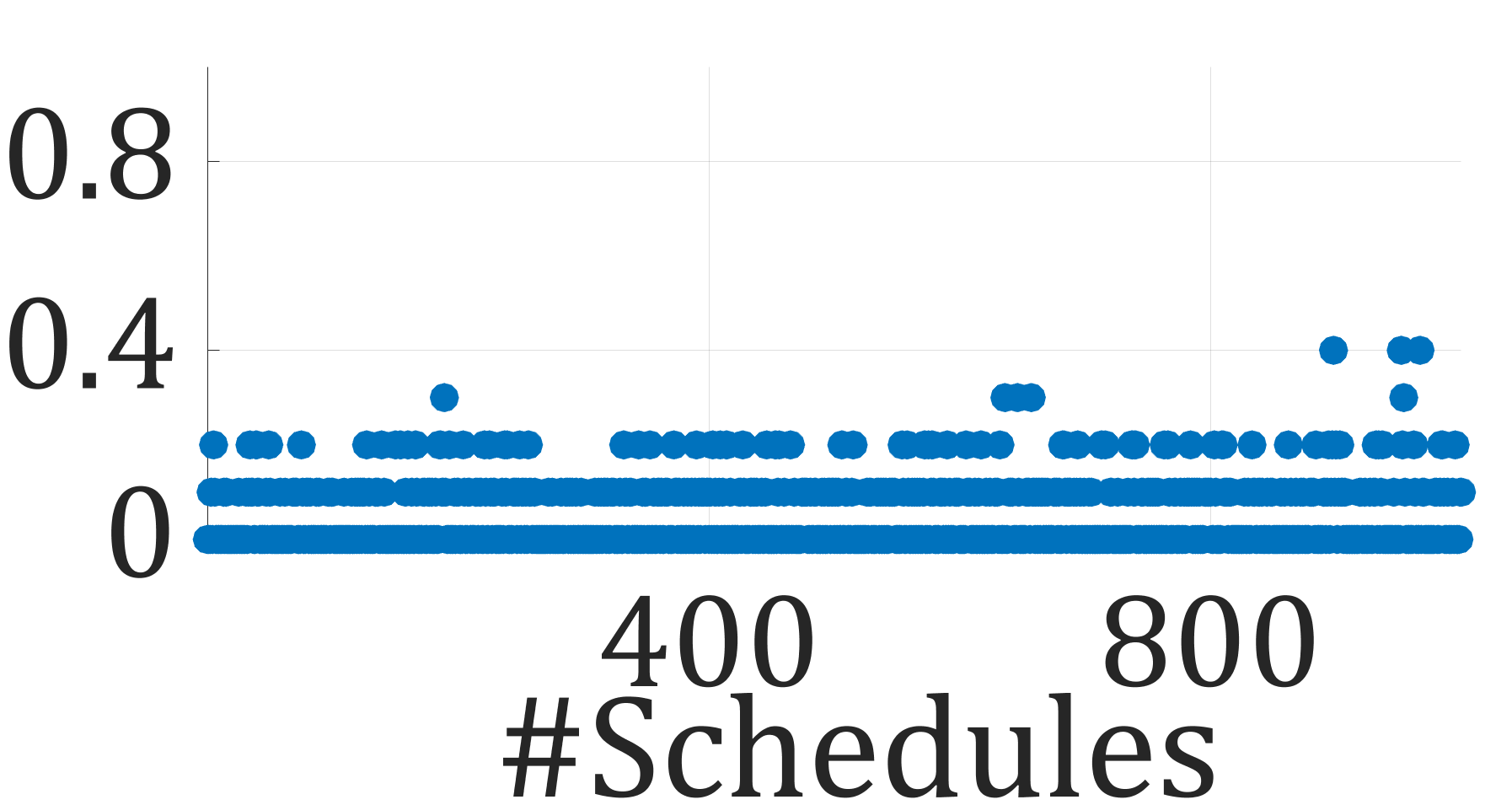}
        \caption{Victim:$\tau_2$, Attack:$\tau_5$}
        \label{maars_lu_v2_a5}
    \end{subfigure}%
    \hfill
    \begin{subfigure}[b]{0.19\linewidth}
        \centering
        \includegraphics[width=\textwidth,clip]{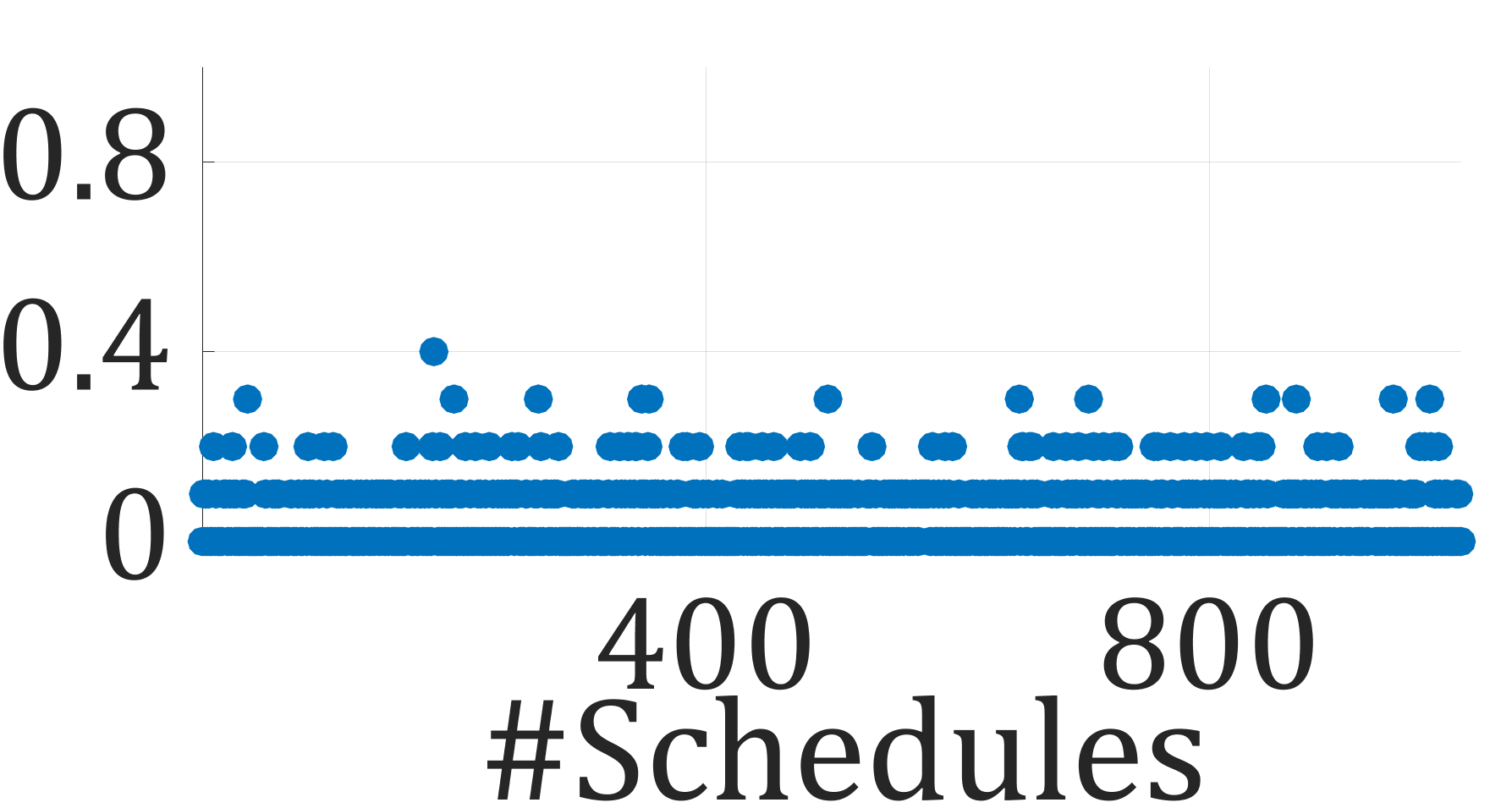}
        \caption{Victim:$\tau_3$, Attack:$\tau_6$}
        \label{maars_lu_v3_a6}
    \end{subfigure}%
    \hfill
    \begin{subfigure}[b]{0.19\linewidth}
        \centering
        \includegraphics[width=\textwidth,clip]{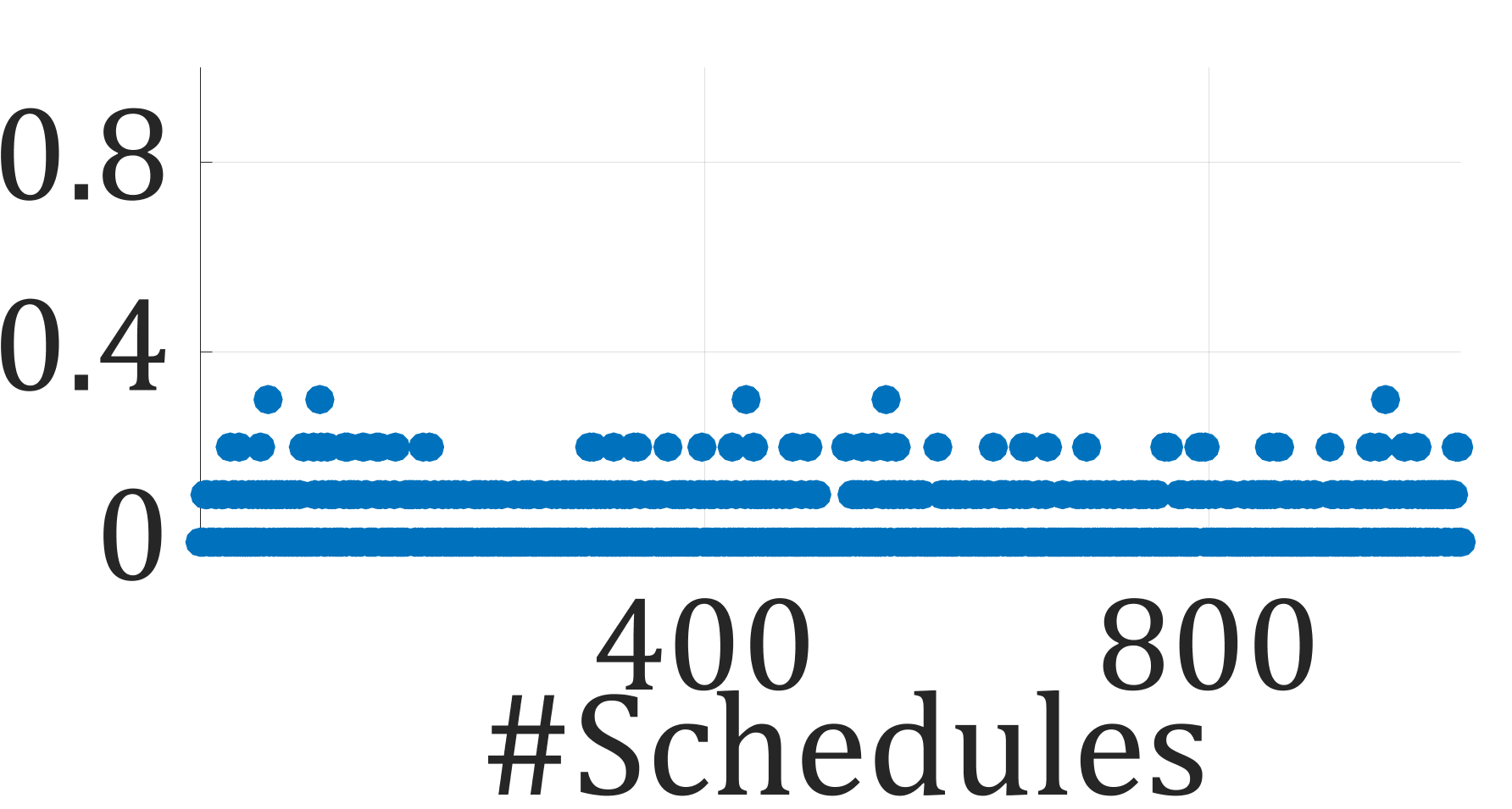}
        \caption{Victim:$\tau_4$, Attack:$\tau_5$}
        \label{fig:maars_lu_v4_a5}
    \end{subfigure}
    \hfill
     \begin{subfigure}[b]{0.19\linewidth}
        \centering
        \includegraphics[width=\textwidth,clip]{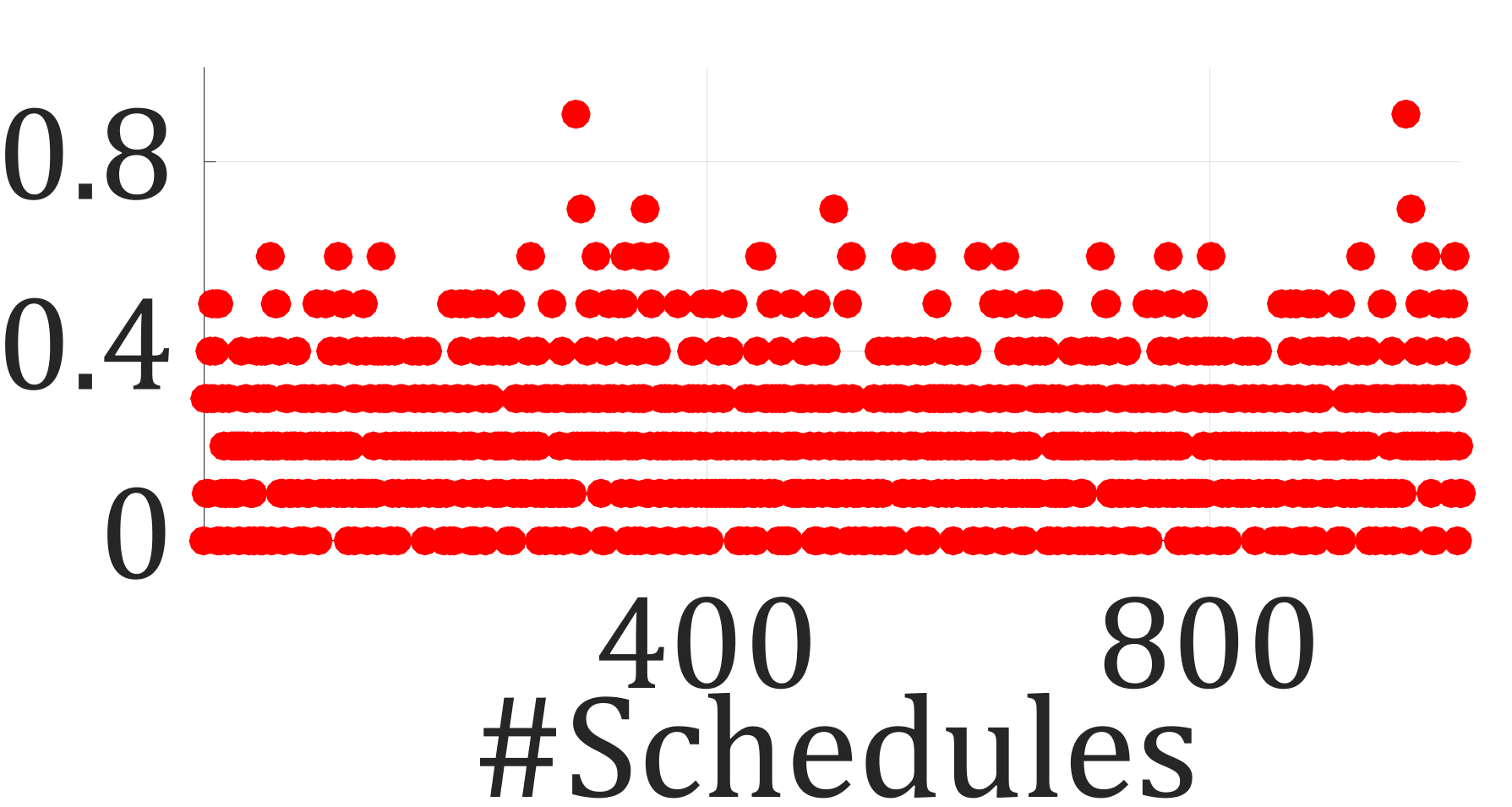}
        \caption{Victim:$\tau_4$, Attack:$\tau_5$}
        \label{fig:taskShuffler_IR_LU}
    \end{subfigure}%
\\
    \begin{subfigure}[b]{0.19\linewidth}
        \centering
        \includegraphics[width=\textwidth,clip]{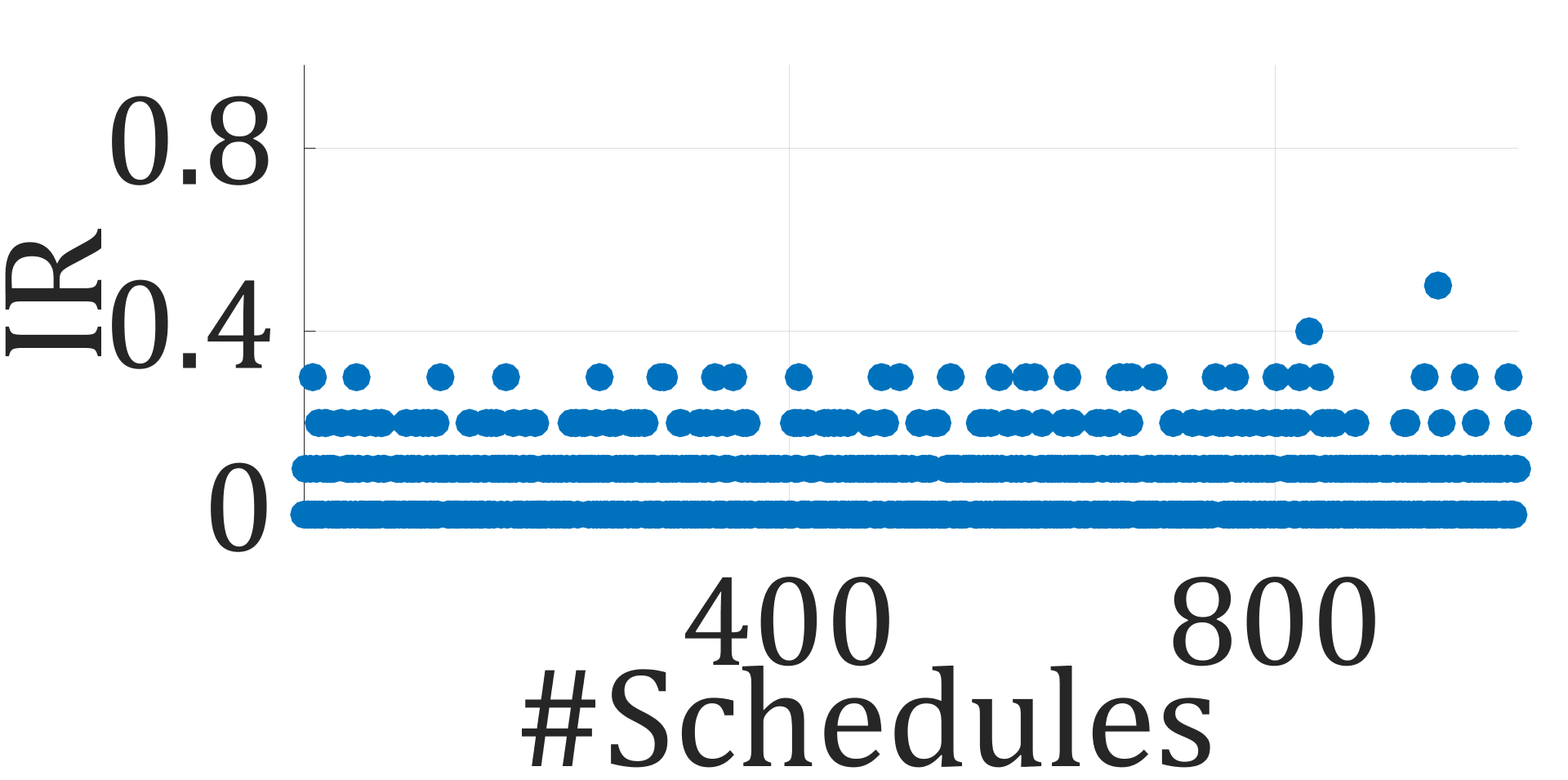}
        \caption{Victim:$\tau_1$, Attack:$\tau_5$}
        \label{subfig:maars_hu_v1}
    \end{subfigure}%
    \hfill
    \begin{subfigure}[b]{0.19\linewidth}
        \centering
        \includegraphics[width=\textwidth,clip]{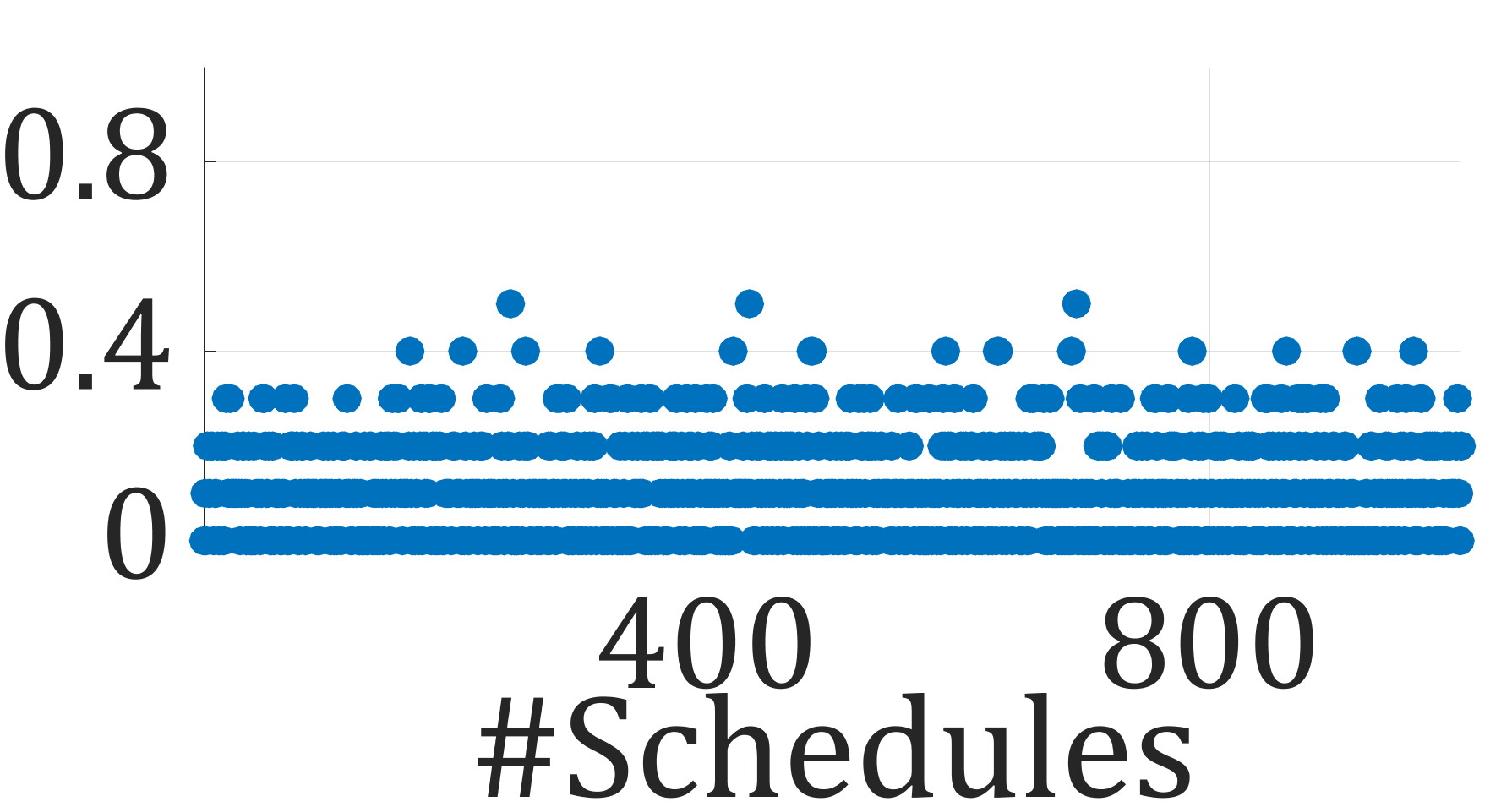}
        \caption{Victim:$\tau_2$, Attack:$\tau_8$}
        \label{subfig:maars_hu_v2}
    \end{subfigure}%
    \hfill
    \begin{subfigure}[b]{0.19\linewidth}
        \centering
        \includegraphics[width=\textwidth,clip]{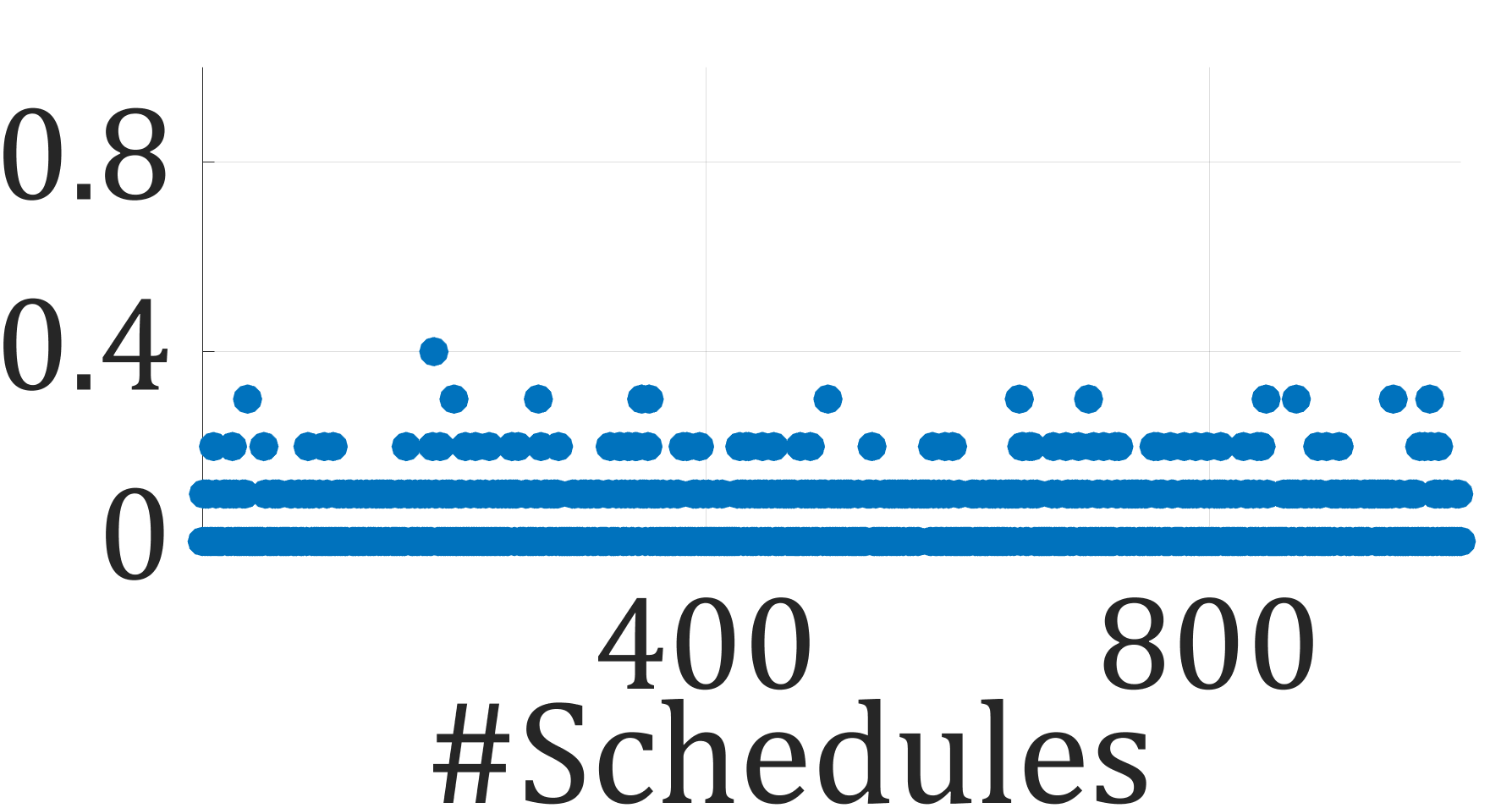}
        \caption{Victim:$\tau_3$, Attack:$\tau_8$}
        \label{subfig:maars_hu_v3}
    \end{subfigure}%
    \hfill
    \begin{subfigure}[b]{0.19\linewidth}
        \centering
        \includegraphics[width=\textwidth,clip]{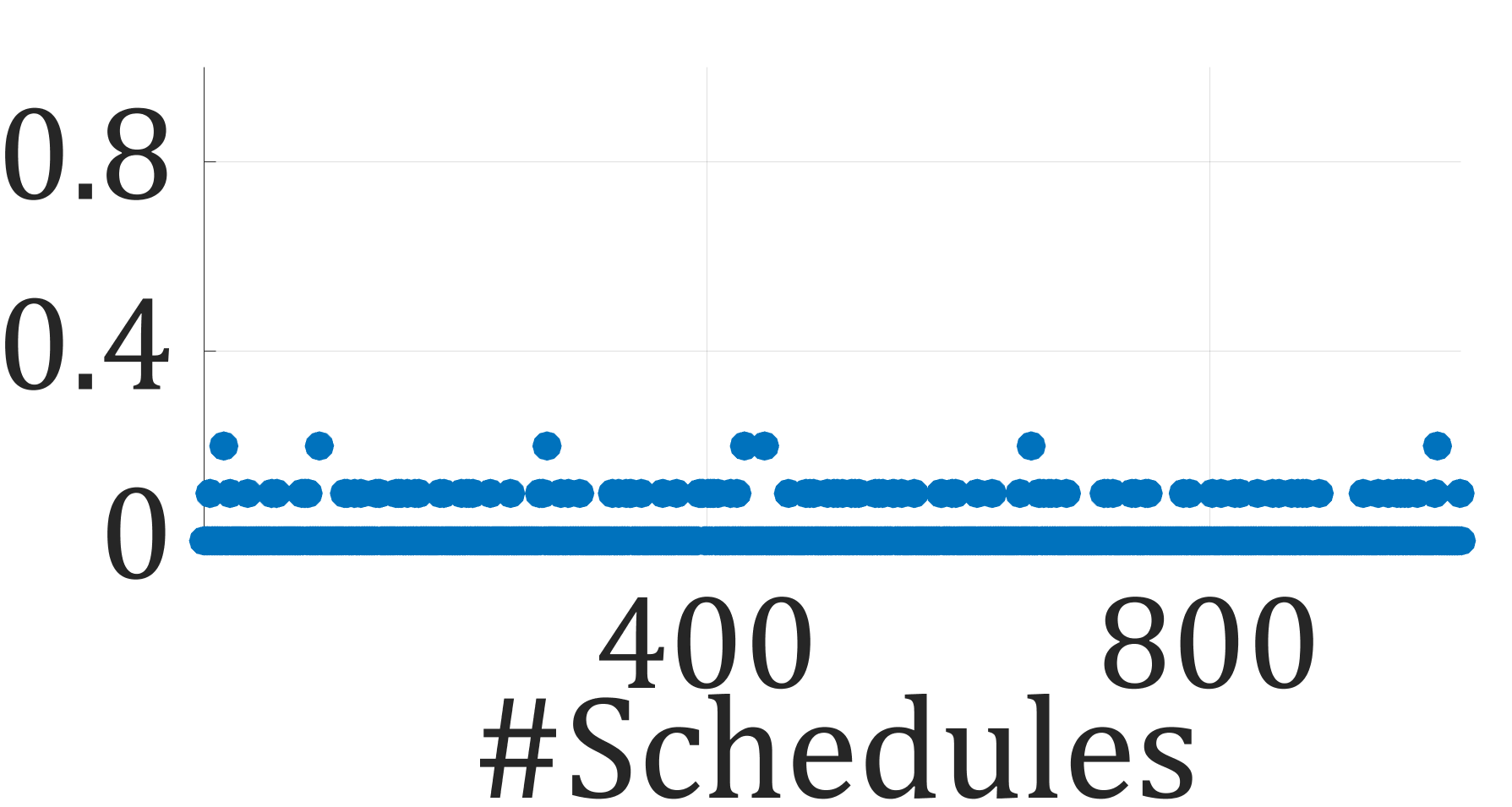}
        \caption{Victim:$\tau_4$, Attack:$\tau_8$}
        \label{subfig:maars_hu_v4}
    \end{subfigure}
    \hfill
     \begin{subfigure}[b]{0.19\linewidth}
        \centering
        \includegraphics[width=\textwidth,clip]{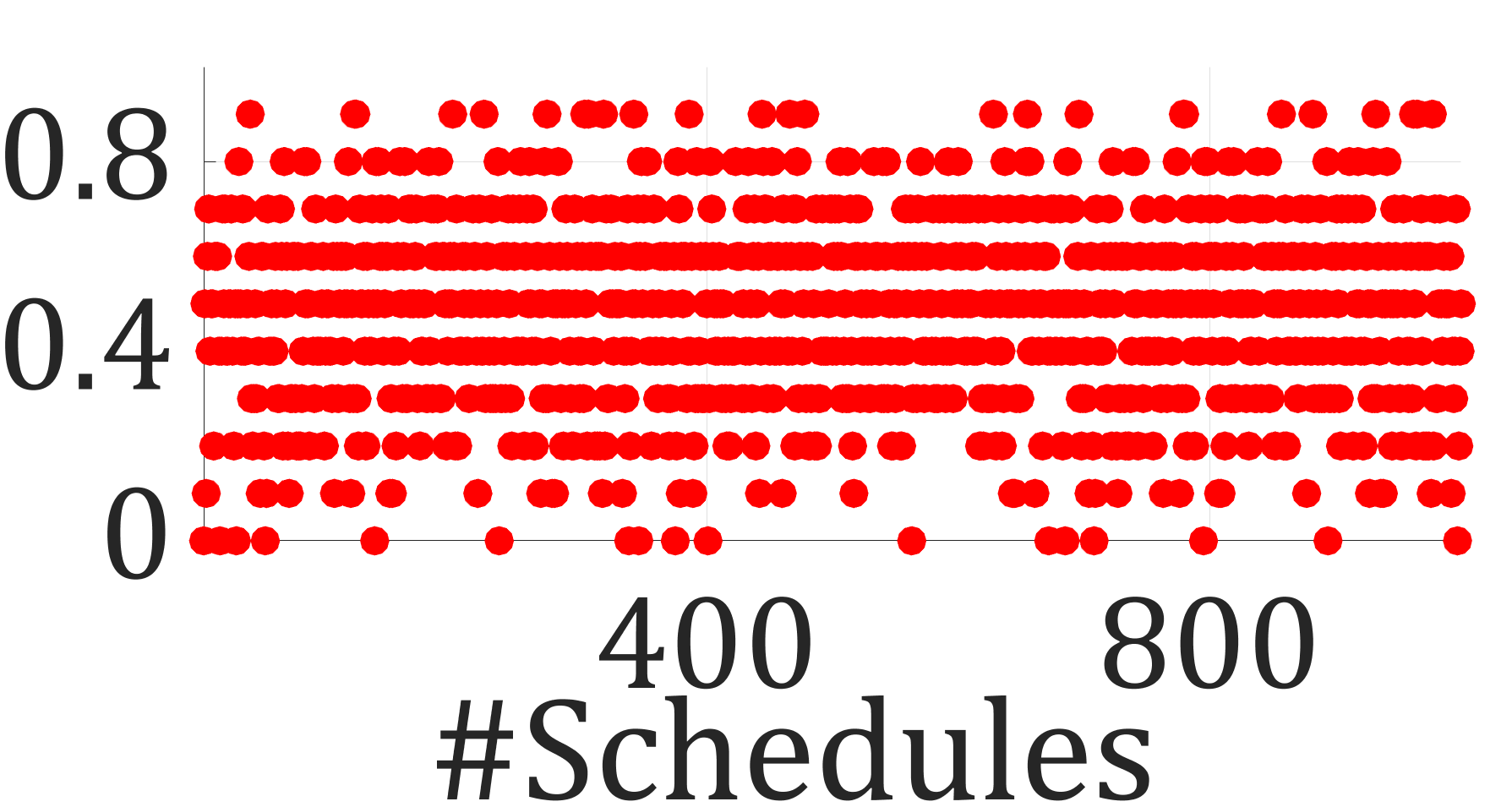}
        \caption{Victim:$\tau_1$, Attack:$\tau_8$}
        \label{fig:taskShuffler_IR_hU}
    \end{subfigure}%
    \caption{\centering  IR in Low Utilization: MAARS (a),(b),(c),(d) and TaskShuffler (e) 
IR in High Utilization Case: MAARS (f),(g),(h),(i) and TaskShuffler (j)}
\vspace{-6mm}
    \label{fig:Inferability_HU}
\end{figure*}
%
\section{Experimental Evaluation}
\label{Experimetal_Evaluation}
\textbf{Experimental Setup:} We experimentally evaluate our proposed \emph{MAARS} framework in a Hardware-in-Loop (HIL) testbed and compare it with state-of-the-art (SOTA) approaches against posterior SBAs. 
The first step of the experiment involves \emph{design-time} schedule generation and analysis for inferability and vulnerability of safety-critical trusted control tasks. This is performed using MATLAB and Python on an 8-core 7th generation Intel i7 CPU with 16 GB of RAM. The implementations can be found in~\cite{gitrepo}. In the next step, we perform a \emph{run-time} analysis of the control tasks under posterior SBA. For this purpose, we employ an ARM-based 32-bit \emph{Infineon Tricore Aurix-397 ECU}, connected to \emph{ETAS Labcar}, a Real-time PC (RTPC), via Controller Area Network (CAN). The implementations can be found in~\cite{gitrepo}.
\begin{table}[h]
\scriptsize
\vspace{-3mm}
\begin{tabular}{|c|c|c|c|c|c|c|c|}
\hline
Task &
  \begin{tabular}[c]{@{}c@{}}Control\\ Loop\end{tabular} &
  \begin{tabular}[c]{@{}c@{}}$P_i$\\(ms)\end{tabular} &
  WCET &
  \begin{tabular}[c]{@{}c@{}}$C_i$\end{tabular} &
  \begin{tabular}[c]{@{}c@{}}TAP\end{tabular} &
  \begin{tabular}[c]{@{}c@{}}$\Omega_i$\end{tabular} &
  \begin{tabular}[c]{@{}c@{}}Schedules\\under TAP\end{tabular} \\ \hline\hline
$\tau_1$& ESP & \{10,15,25\} & 1 & 0.4 & 0.1 & 3 & 15,68 \\ \hline
$\tau_2$& TTC & \{10,15,25\} & 1 & 0.3 & 0.2 & 4 & 122,762 \\ \hline
$\tau_3$& CC  & \{10,25,35\} & 1 & 0.2 & 0.1 & 1 & 72,1088 \\ \hline
$\tau_4$& SC  & \{20,25,30\} & 1 & 0.1 & 0.2 & 8 & 297,4023 \\ \hline
\end{tabular}
\captionsetup{font=scriptsize}
\caption{Trusted Task Set Parameters}
\vspace{-4mm}
\label{tab:experiment_task_set}
\end{table}
\par The physical system/plant is implemented on the RTPC, which sends the sensor measurement to the ECU via CAN, where the control tasks are co-scheduled. Following \emph{AUTOSAR} mandates\cite{furst2009autosar}, we design separate reception tasks to filter and receive sensor IDs transmitted by the ETAS Labcar RTPC over CAN. After processing these sensor data
the control tasks (scheduled by fixed-priority preemptive scheduler in the ECU) are executed. Subsequently, corresponding transmission tasks send computed control input through CAN to the RTPC facilitating plant actuation. Our proposed run-time schedule selection process $\Call{SchedSel}{ }$ 
is scheduled to run once during each hyper-period in the same core as the other control tasks on the ECUs. Note that $\Call{SchedSel}{ }$ doesn't use data from the buffer; rather, it receives the detector signal in an interrupt-driven manner and provides a new schedule to the scheduler for the next hyper-period. So the attacker cannot compromise its operations.
\begin{wraptable}[7]{l}{0.53\columnwidth}
\scriptsize
\vspace{-3mm}
\centering
\begin{tabular}{|l|l|l|l|}
\hline
Task & $p_i$(ms) & $e_i$(ms) & \begin{tabular}[c]{@{}l@{}}Utilization\\ (Low/High)\end{tabular} \\ \hline \hline
$\tau_5$ & 10 & 1 & \multirow{3}{*}{Low, High} \\ \cline{1-3}
$\tau_6$ & 30 & 1 &  \\ \cline{1-3}
$\tau_7$ & 20 & 1 &  \\ \hline
$\tau_8$ & 30 & 5 & \multirow{2}{*}{High} \\ \cline{1-3}
$\tau_9$ & 10 & 2 &  \\ \hline
\end{tabular}
\caption{Experiment: Untrusted Task Set}
\vspace{-6mm}
\label{tab:untrusted_task_table}
\end{wraptable}
We implement four distinct automotive controllers 
similar to the experimental setup in~\cite{koley2021catch,adhikary2022work}. These control tasks include i) an electronic stability program (ESP), responsible for maintaining yaw stability, ii) trajectory tracking control (TTC) system that regulates deviation from a desired longitudinal trajectory, iii) cruise control (CC) system that ensures 
a desired vehicle speed, and iv) suspension control (SC) system that maintains vehicle suspension across varied roads and driving conditions. 
All four control tasks' parameters, their respective AEWs, criticality levels, and TAPs are tabulated in Tab.~\ref{tab:experiment_task_set}. The parameters of the untrusted task set are listed in Tab.~\ref{tab:untrusted_task_table}. We have performed the analysis on two different processor utilization scenarios: Low Utilization (LU) ($\leq 60\%$ with less no. of untrusted tasks, refer to utilization column in Tab.~\ref{tab:untrusted_task_table}) and High utilization (HU) $(\geq 90\%)$. 
\par\noindent\textit{\textbf{(1) Design-time  Analysis: }}
To generate a set of security-aware task schedules, first, it is required to synthesize a set of periodicities for the control tasks (Tab.~\ref{tab:experiment_task_set}) that ensures their control performance. To achieve this, we solve the LMIs given in Claim~\ref{clmclf}
using 
\emph{YALMIP} \cite{lofberg2004yalmip} along with \emph{Gurobi}  solver \cite{gurobi}. 
The performance-preserving set of periodicities of each control task is further pruned following Claim~\ref{claim:sampling_period_analysis} and Remark~\ref{remark1},
to derive the secure sampling rates for all the control tasks. The final pruned set of security-aware performance-preserving periodicities $P_i$ of each control task are presented in the $3$-rd column of Tab.~\ref{tab:experiment_task_set}). 
\begin{figure}[!hb]
    \centering
    \vspace{-5mm}
    \begin{subfigure}[b]{0.49\columnwidth}
        \centering
        \includegraphics[width=\textwidth]{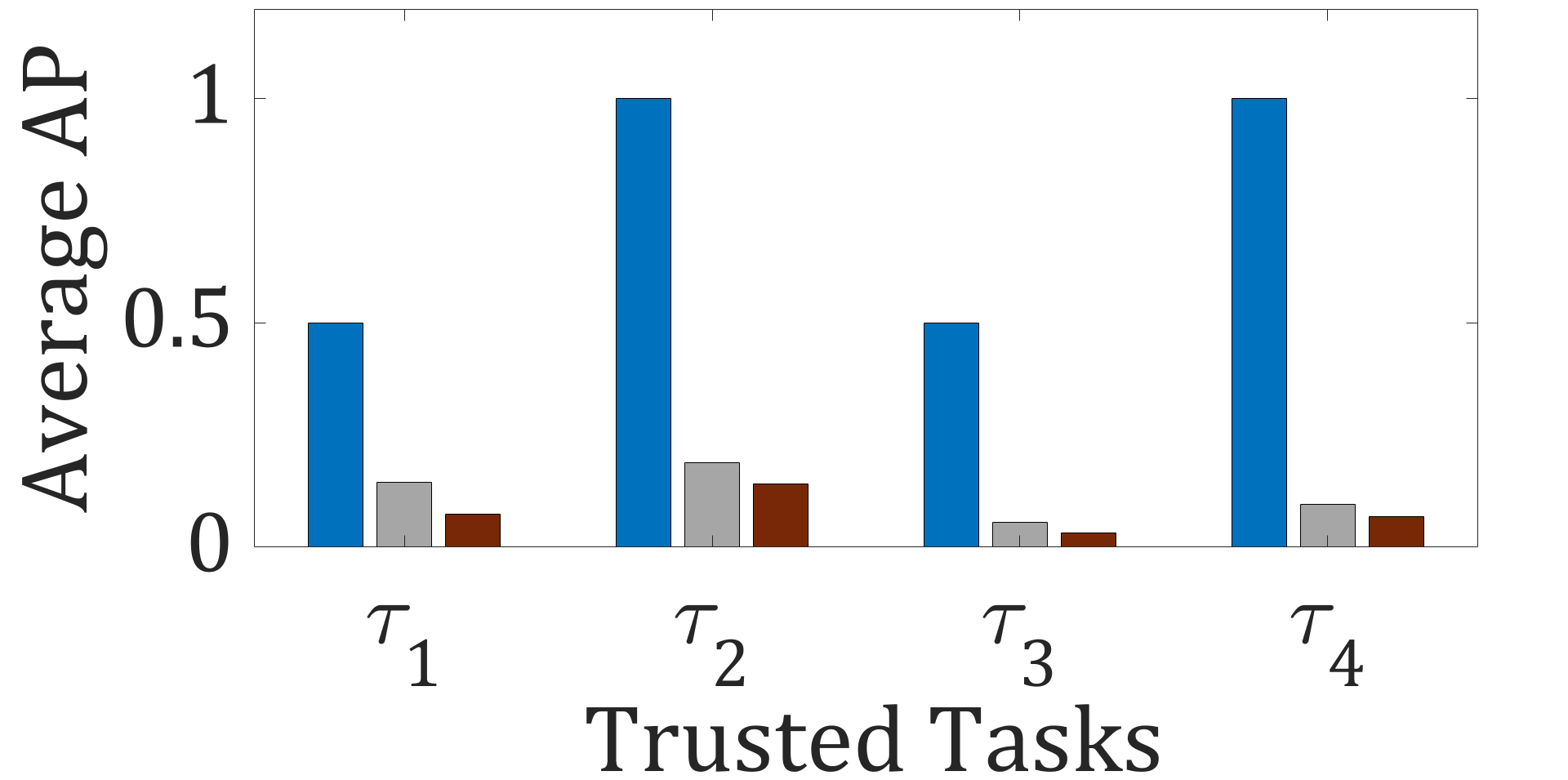}
        \caption{Case 1: Low Utilization}
        \label{fig:LUAP}
    \end{subfigure}
    \hfill
    \begin{subfigure}[b]{0.49\columnwidth}
        \centering
        \includegraphics[width=\textwidth]{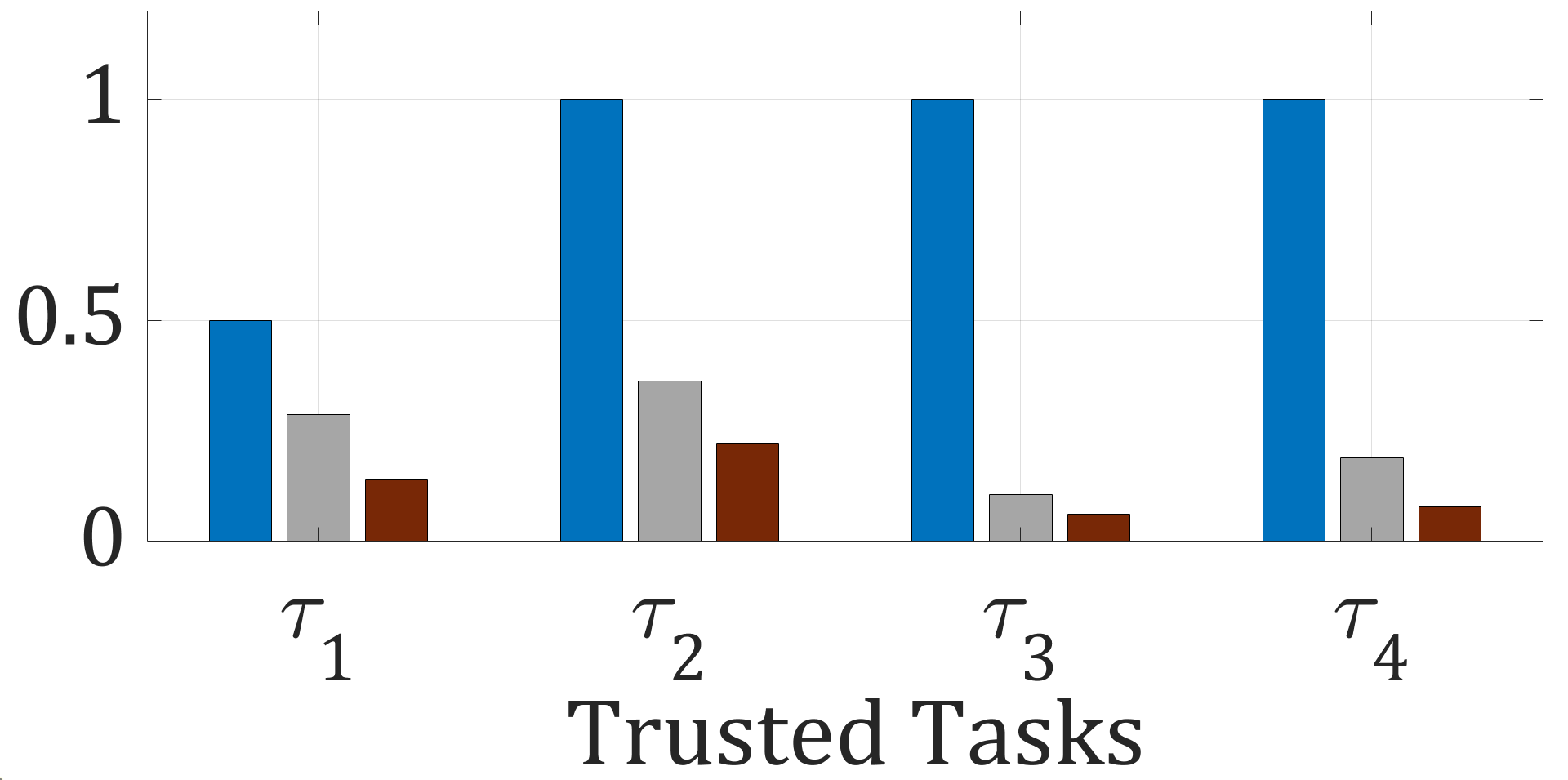}
        \caption{Case 2: High Utilization}
        \label{fig:HUAP}
    \end{subfigure}
      \caption{Average AP: Static Priority Scheduler (Blue), Attack-Unaware Randomization (grey), MAARS (brown)}
      \vspace{-4mm}
    \label{fig:avg_atk_prob}
\end{figure}
Next, we extend the \emph{TaskShuffler} algorithm for multi-rate scheduling, 
as described in Sec.~\ref{subsec:Secure_Sampling_Period_DIscussion} to generate a set $\mathcal{S}$ of valid and randomly deployable schedules using $P_i$'s in both HU and LU scenarios.
For each victim task in Tab.~\ref{tab:experiment_task_set}, we find the number of schedules for all victim tasks with attack probability under their respective TAPs. 
The last column of Tab~\ref{tab:experiment_task_set} shows the number of such schedules. Now, using our experimental results, we discuss and show that our proposed randomized schedule generation method outperforms the SOTA in terms of \emph{inferability ratio}, \emph{attack probability}, and \emph{schedule vulnerability}. As SOTA, we consider two cases: i) processor with \emph{static priority} schedules, and ii) attack-unware randomized schedules are deployed where the set of randomized schedules is generated by \emph{TaskShuffler}, only considering the minimum sampling rates of each control task.
\par\noindent {\em{\bf(A)} Inferability Ratio (IR): }
IR of $1,000$ randomized schedules generated by our proposed MAARS framework are presented in Fig.~\ref{fig:maars_lu_v1_a5},~\ref{maars_lu_v2_a5},~\ref{maars_lu_v3_a6},~\ref{fig:maars_lu_v4_a5} for LU and Fig.~\ref{subfig:maars_hu_v1},~\ref{subfig:maars_hu_v2}~\ref{subfig:maars_hu_v3}, ~\ref{subfig:maars_hu_v4} for HU. The above plots depict the IRs of a certain victim-attacker combination among all possible ones that exhibit the maximum IR.
For example, the schedules of $\tau_1,\tau_2$, and $\tau_4$ show highest inferability for the untrusted task $\tau_5$ and schedules of $\tau_3$ show maximum inferability for the untrusted task $\tau_6$ in case of LU.
Fig.~\ref{fig:taskShuffler_IR_LU} and Fig.~\ref{fig:taskShuffler_IR_hU} (the red plots) represents the highest IR among all the victim-attacker combinations when attack-unaware randomized schedules are deployed.
We can observe that in the case of attack-unaware schedule randomization, most of the randomized schedules show higher IR values compared to the schedules generated by MAARS framework. It implies that MAARS is much better at  hiding timing information of safety-critical control tasks than SOTA.
%
\par\noindent {\it{\bf (B)} Average Attack Probability: }
For each schedule in $\mathcal{S}$, we determine the attack probability of each trusted task using Def.~\ref{Def: attack_probability}. For each trusted task $\tau_i$, we compute the attack probability $AP_i^j$ considering the untrusted task $\tau_j$. The average attack probability $\bar{AP_i}= \frac{\sum_{j=1}^{M}AP_{\langle\tau_i,s_j\rangle}}{M}$ is computed for each trusted task $\tau_i$ where $M$ is the total number of schedules generated by MAARS. Fig.~\ref{fig:avg_atk_prob} shows the average AP for each of the four trusted tasks in Tab.~\ref{tab:experiment_task_set} for both LU and HU. The blue, grey, and brown plots in Fig.~\ref{fig:avg_atk_prob} represent the average APs in case of the schedules generated by static priority schedule, attack-unaware randomized schedule, and the MAARS framework.
We can observe in Fig.~\ref{fig:LUAP}, under LU, schedules generated by MAARS (dark brown plot) show $88\%$ and $38.5\%$ reduction in average AP compared to the static (blue) and SOTA (light grey) respectively. Fig.~\ref{fig:HUAP} shows that under HU, schedules generated by MAARS (dark brown plot) show $83.7\%$ and $47.5\%$ reduction in average AP compared to the static (blue) and SOTA (light grey), respectively.
\begin{wrapfigure}[8]{l}{0.45\columnwidth}
    \centering   
    \vspace{-4mm}
\includegraphics[width=0.48\columnwidth,clip]{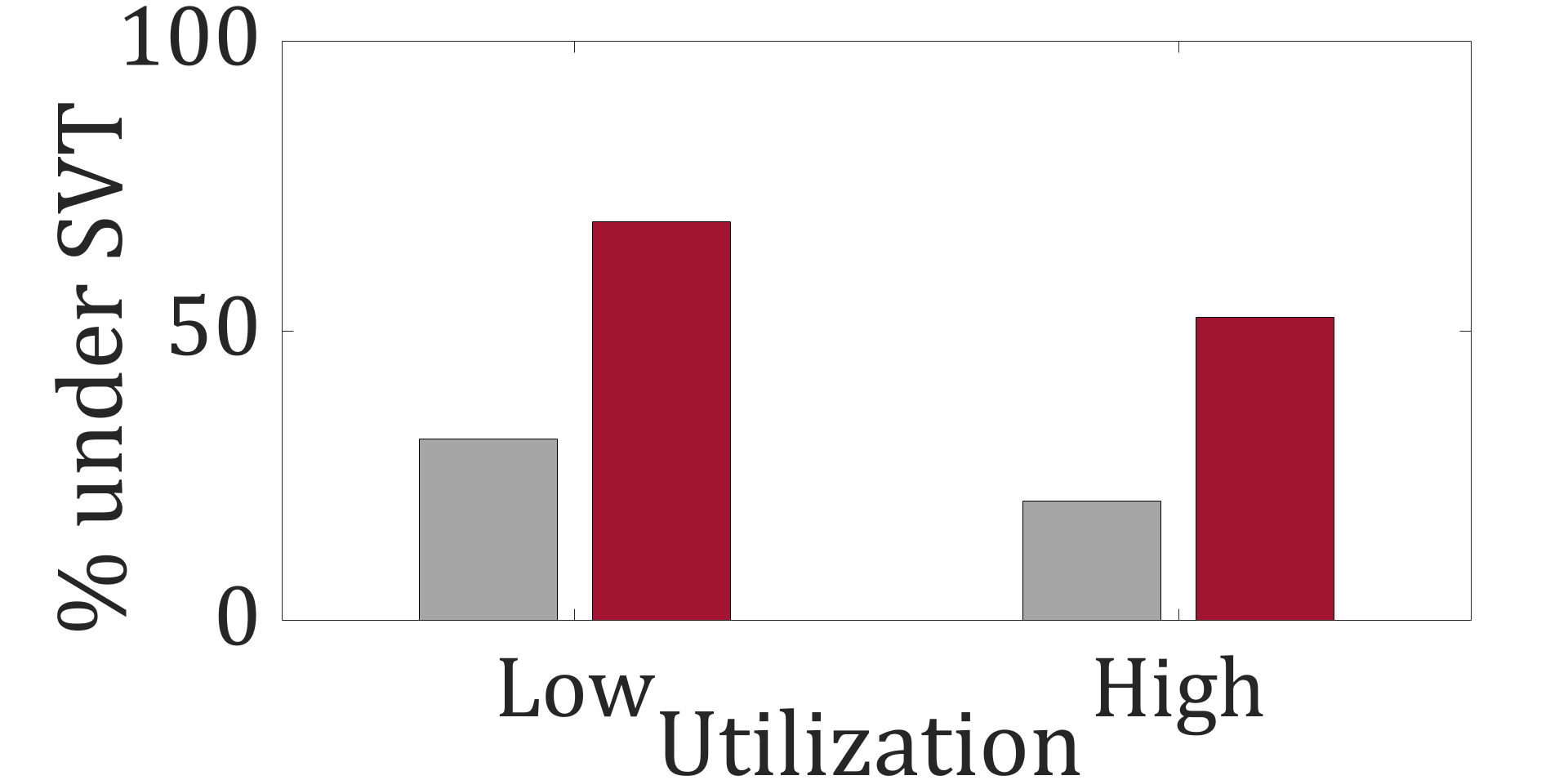}
   \caption{ \% of schedules under SVT: attack-unaware Randomization (grey), MAARS (brown) }
        \label{fig:SVI}
\end{wrapfigure}
\noindent {\it{\bf (C)} Schedule Vulnerability Index: }
In Fig.~\ref{fig:SVI}, we plot the percentage of schedules out of all the schedules generated by MAARS (in brown) and SOTAs (in grey) that have SVIs below the schedule vulnerability threshold or SVT. The SVT value calculated with the TAP values corresponding to our control task set (given in Tab.~\ref{tab:experiment_task_set}) is $0.14$ (refer to Sec.~\ref{section:schedule_analysis}).
Our results show that in the LU scenario, $68\%$ of MAARS schedules have $SVI<SVT$ compared to only $31\%$ for attack-unaware randomization. Similarly, in the HU scenario, our results show that $52\%$ and $20\%$ schedules have SVI below SVT for MAARS and attack-unaware scheduling, respectively. Therefore, by using MAARS, we can generate more schedules with lower vulnerability against posterior SBAs.
\par From the above analysis, we can state that our proposed MAARS framework reduces the inferability ratio and vulnerability index of the schedules and attack probabilities of the tasks compared to the SOTAs.
\par\noindent \textit{\textbf{(2) Runtime Deployment in HIL: }}
Since $\tau_2$ exhibits the highest average attack probability (Fig.~\ref{fig:HUAP}) and $\tau_8$ has the highest schedule inference capability (Fig.~\ref{subfig:maars_hu_v2}) in HU scenario, henceforth we consider $\tau_2$ to be victim task and $\tau_8$ to be the attacker task.
We consider settling time ($t_s$) and GUES decay rate $(\lambda)$ metrics to evaluate the plant's performance. To launch the attack, $\tau_8$ executes a malicious code on the same ECU in which $\tau_2$ is implemented.
The malicious code overrides the control input data sent to ETAS RTPC via CAN. A windowed $\chi^2$-detector (window length=1) is implemented along with $\tau_2$ to detect any unnatural deviation in the plant states. The detector's threshold is set to 4 for the false alarm rate (FAR) to be less than 2\%. We compare the performance of the TTC system under posterior SBA in Fig.~\ref{fig:Attack_on_TTC} when schedules generated by MAARS are deployed and when a static fixed priority schedule is deployed.
In Fig.~\ref{fig:Attack_on_TTC}, the green, blue, red, and magenta plots respectively represent the control input acceleration, plant state deviation from the reference trajectory, the detector's threshold, and $\chi^2$ statistics of the detector. Under no SBA, we note $t_s= 4.3s$ ($<$desired $t_s=10$), and $\lambda =95\%$ when a fixed priority schedule is used in Fig.~\ref{fig: Static_TTC_HIL_no_attack}. However, we can observe in Fig.~\ref{fig:Static_TTC_with_Atk} that the plant becomes unstable due to posterior SBA under the static fixed priority schedule.
On the other hand, Fig.~\ref{fig: MAARS_without_Atk} shows the behavior of the TTC system under no attack when MAARS is deployed.
We can observe $t_s=6s$ and $\lambda=93\%$. Under posterior SBA, the scheduler switches from normal mode to alert mode once $\chi^2$ residue$>4$ (red highlighted area in Fig.~\ref{fig:MAARS_TTC_with_Atk}). In alert mode, we note $t_s=9.2s$ and $\lambda=89\%$. This implies that the system is resilient against posterior attack when MAARS is deployed and subsequently the performance of the system is preserved.
\begin{figure}[!htbp]
    \centering
    \vspace{-6mm}
    \begin{subfigure}[b]{0.49\columnwidth}
        \centering
        \includegraphics[width=\textwidth,clip]{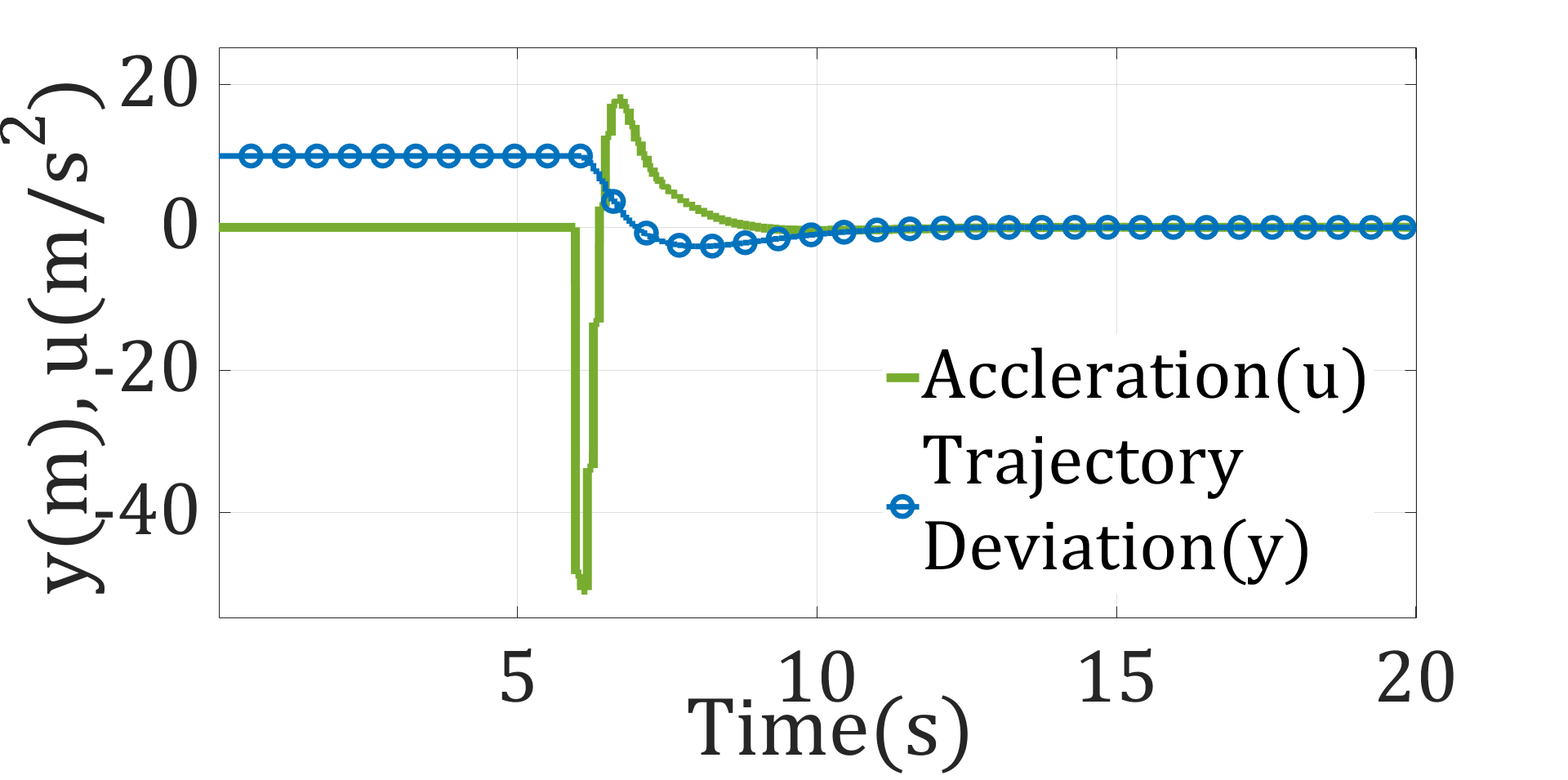}
        \caption{No SBA}
        \label{fig: Static_TTC_HIL_no_attack}
    \end{subfigure}
    \hfill
    \begin{subfigure}[b]{0.49\columnwidth}
        \centering
        \includegraphics[width=\textwidth,clip]{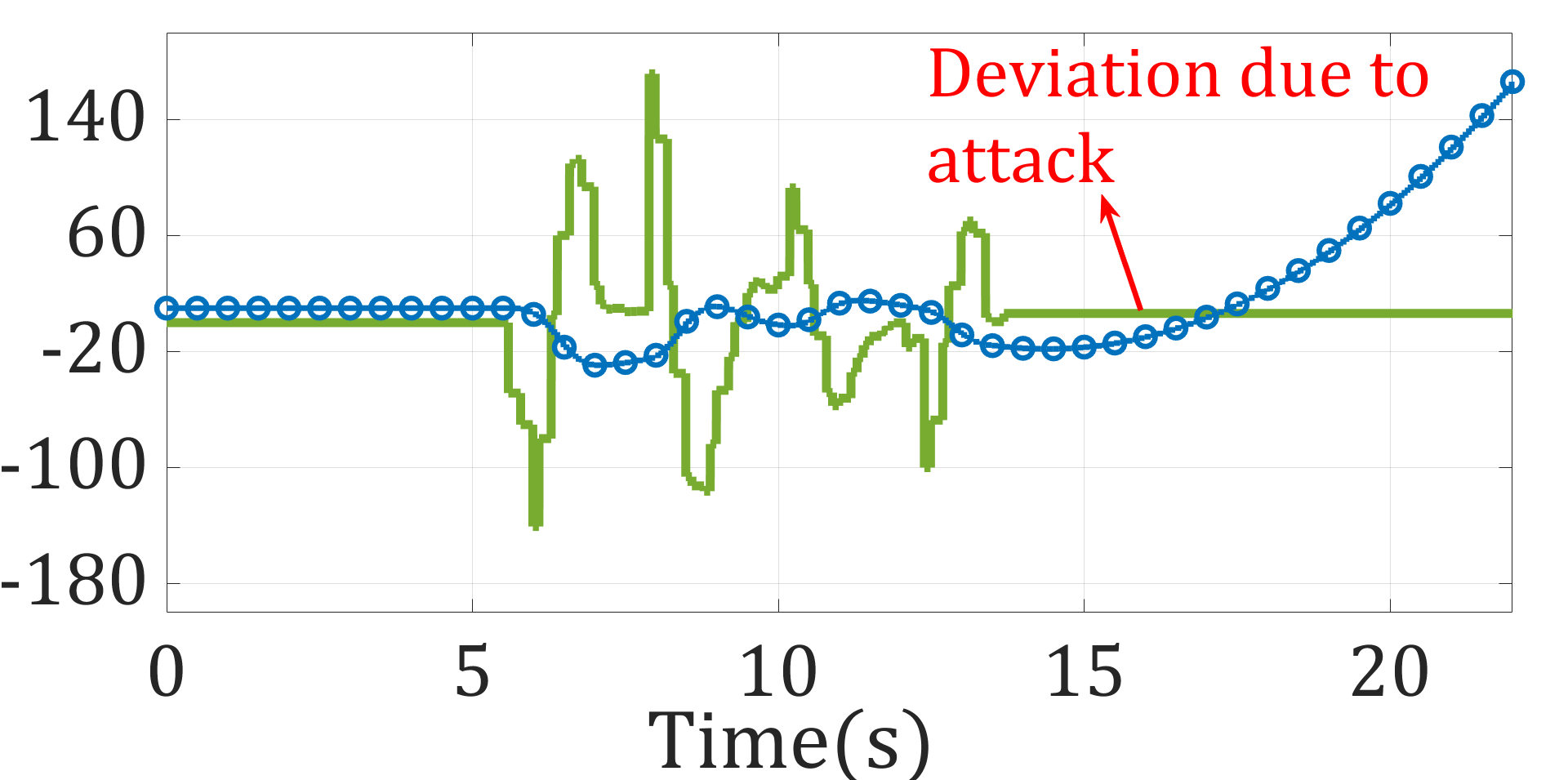}
        \caption{
        With SBA}
        \label{fig:Static_TTC_with_Atk}
    \end{subfigure}
\\
    \begin{subfigure}[b]{0.49\columnwidth}
        \centering
        \includegraphics[width=\linewidth,clip]{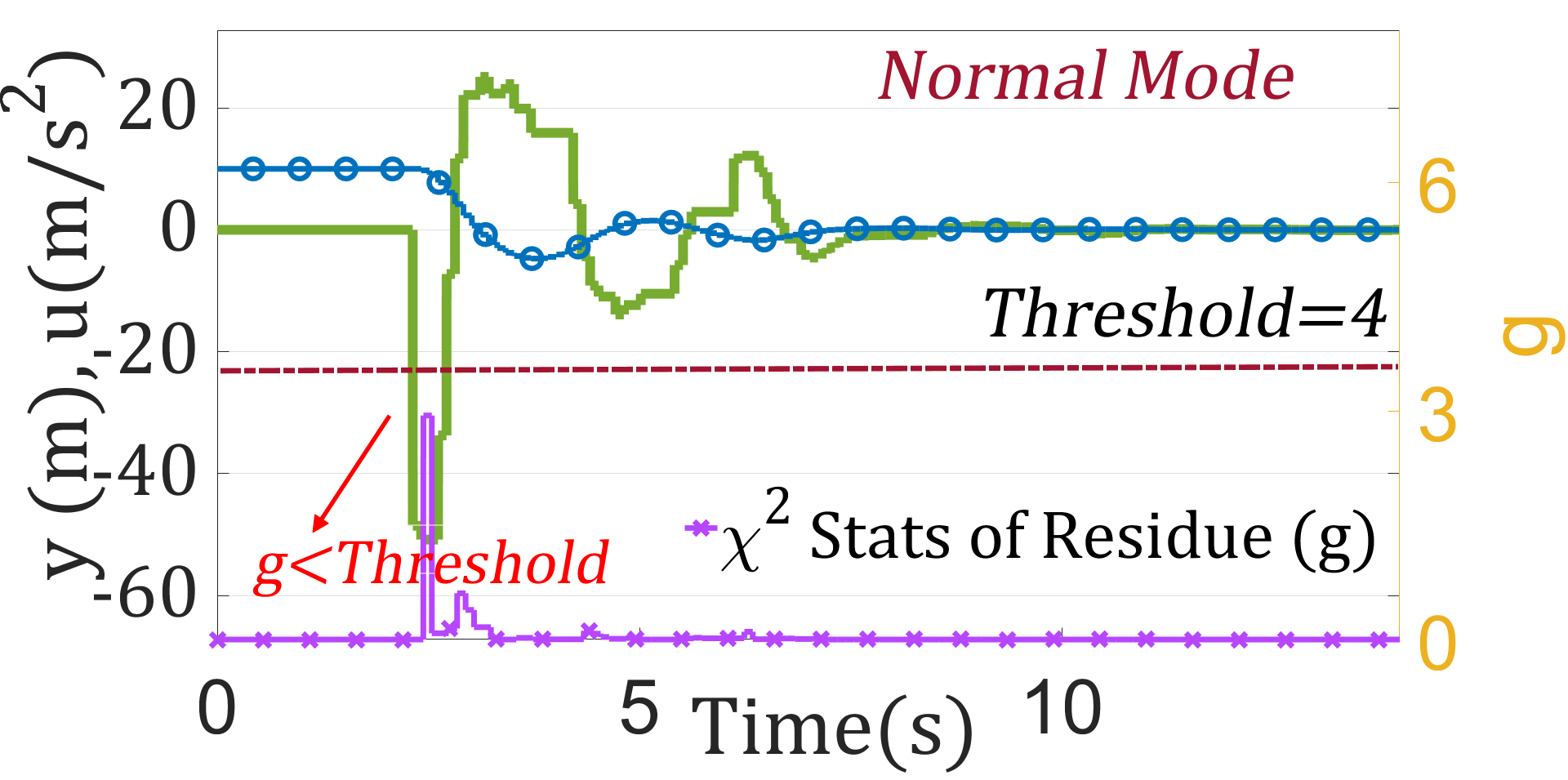}
        \caption{No SBA}
        \label{fig: MAARS_without_Atk}
    \end{subfigure}
    \hfill
    \begin{subfigure}[b]{0.49\columnwidth}
        \centering
        \includegraphics[width=\textwidth,clip]{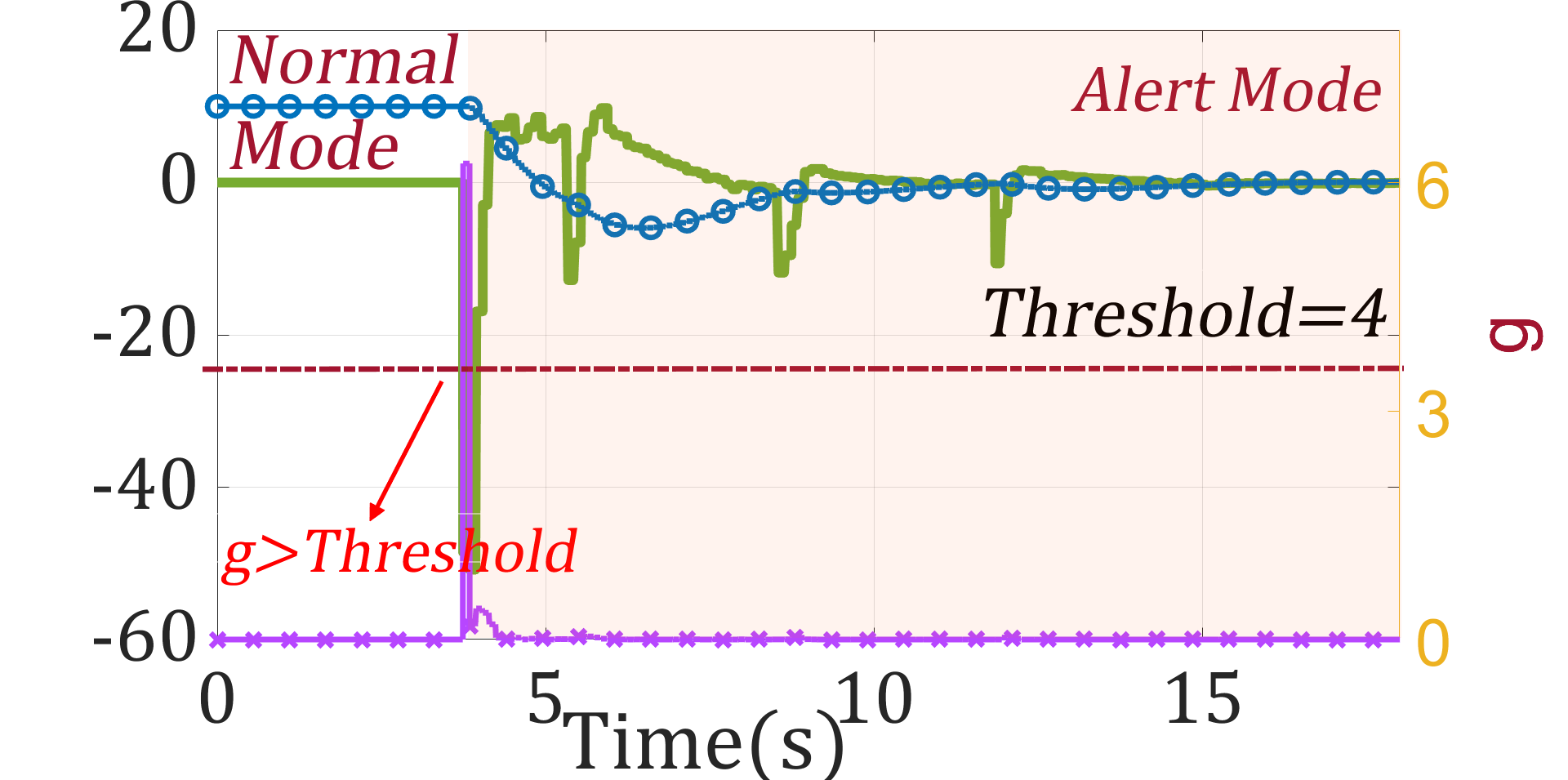}
        \caption{With SBA}
        \label{fig:MAARS_TTC_with_Atk}
    \end{subfigure}
      \caption{\centering Plant Response with TTC: (a),(b) Using static priority scheduler and (c),(d) using MAARS Scheduling Framework}
      \vspace{-5mm}
    \label{fig:Attack_on_TTC}
\end{figure}
\section{Conclusion}\label{secConcl}
Existing schedule randomization methodologies for secureing real-time task-schedules are attack-unaware, hence susceptible to schedule-based attacks. In this paper we present MAARS framework, a novel attack-aware schedule randomization approach that selects performance-aware multiple sampling rates such that the exposure of task parameters is reduced and the vulnerability against posterior schedule-based attacks are mitigated. Our results show that MAARS framework is effective in reducing inferability of critical trusted task parameters, while also reducing posterior attack probability. In future, we aim to extend this work for multiprocessor systems to provide security against other practical schedule-based attacks. 

\bibliography{reference.bib}
\bibliographystyle{ieeetr}
\end{document}